\newcommand{\naturalnumberpositive}{\ensuremath{{\mathbb{N}^+}}}
\newcommand{\naturalnumberzero}{\ensuremath{{\mathbb{N}_0}}}
\newcommand{\p}{\ensuremath{{\rm P}}}
\newcommand{\np}{\ensuremath{{\rm NP}}}
\title{A Critique of Quigley's ``A Polynomial Time Algorithm for 3SAT''\thanks{Supported in part by NSF grant CCF-2006496}}
\author{Nicholas~DeJesse}
\author{Spencer~Lyudovyk}
\author{Dhruv~Pai}
\affil{Department of Computer Science\\University of Rochester\\Rochester, NY 14627, USA}
\newtheorem{example}{Example}
\newtheorem{theorem}{Theorem}
\newtheorem{definition}{Definition}
\newtheorem{lemma}[theorem]{Lemma}
\date{October 27, 2025}
\begin{document}\sloppy

\maketitle

\begin{abstract}
In this paper, we examine Quigley's ``A Polynomial Time Algorithm for 3SAT" \cite{qui:poly-sat}. Quigley claims to construct an algorithm that runs in polynomial time and determines whether a boolean formula in 3CNF form is satisfiable. Such a result would prove that 3SAT~$\in$~$\p$ and thus $\p=\np$. We show Quigley's argument is flawed by providing counterexamples to several lemmas he attempts to use to justify the correctness of his algorithm. We also provide an infinite class of 3CNF formulas that are unsatisfiable but are classified as satisfiable by Quigley's algorithm. In doing so, we prove that Quigley's algorithm fails on certain inputs, and thus his claim that $\p=\np$ is not established by his paper.
\end{abstract}

\section{Introduction}
This critique analyzes Quigley's ``A Polynomial Time Algorithm for 3SAT" \cite{qui:poly-sat}, which claims to provide a polynomial time algorithm for determining whether a boolean formula in 3CNF form is satisfiable \footnote{We critique Version 1 of Quigley's paper, which at the time of writing is the most recent and only version.}. Since 3SAT is NP-complete (see \cite{sip:b:introduction-third-edition}), this result implies that $\p=\np$.

The relationship between the complexity classes $\p$ and $\np$ is one of the most important unsolved problems in complexity theory, and proving equality (or lack thereof) would have drastic effects on computer science and many other fields. For example, if $\p=\np$, many cryptographic systems, such as RSA encryption, would be insecure, compromising the security of communication on the Internet \cite{sip:b:introduction-third-edition}. Additionally, proving $\p=\np$ would have major benefits in operations research and logistics, since many problems previously thought to be intractable, such as the Traveling Salesman Problem, which is $\np$-complete \cite{cor-lei-riv-ste:b:algorithms-third-edition}, would become solvable in polynomial time. Conversely, a proof of $\p\neq\np$ would have major consequences in the field of computational complexity, such as implying the existence of $\np$-intermediate languages (i.e. $\np$ languages that are neither in $\p$ nor $\np$-complete) \cite{lad:j:np-incomplete}.

In this paper, we argue that Quigley's algorithm, which claims to decide instances of 3SAT deterministically in polynomial time \cite{qui:poly-sat}, is flawed by providing counterexamples to the lemmas with which he attempts to argue his algorithm's correctness, and we provide an infinite set of unsatisfiable boolean formulas that Quigley's algorithm classifies as satisfiable.

\section{Preliminaries}
Let $\naturalnumberzero=\{0, 1, 2, 3,\ldots\}$ and $\naturalnumberpositive = \{1, 2, 3, \ldots\}$. A terminal (i.e., variable) is a symbol appearing in a boolean formula that can be assigned a value of either true or false. A term (i.e., literal) is a terminal in either positive or negated form that appears in a clause. For example, in the formula $x_1 \lor \overline{x_1} \lor x_2$, the two terminals are $x_1$ and $x_2$, and the three terms are $x_1$, $\overline{x_1}$, and $x_2$. A boolean formula is in $k$CNF form for some $k \in \naturalnumberpositive$ if it is a conjunction of any number of clauses, where each clause is a disjunction of at most $k$ terms. A partial assignment to a boolean formula maps each terminal of a proper subset of the terminals in the formula to either true or false, whereas a complete assignment maps every terminal in the formula to either true or false. A satisfying assignment to a boolean formula is an assignment that, when applied to the terminals in a boolean formula, causes the boolean formula to evaluate to true.
A boolean formula is satisfiable if and only if there exists a satisfying assignment for it, and it is unsatisfiable if such a satisfying assignment does not exist.

Additionally, we assume the reader is familiar with nondeterminism, the complexity classes $\p$ and $\np$, and big O notation. For more information on any of these topics, readers can consult any standard textbook \citep[e.g.,][]{aro-bar:computational-complexity, hop-ull:b:automata, sip:b:introduction-third-edition}.

\subsection{Quigley's Definitions}\label{quig-def}
Quigley begins his paper with some basic definitions required to understand both his algorithm and the lemmas with which he attempts to prove its correctness. In this section, we mention the most relevant definitions. Quigley first defines what it means for a clause in a boolean formula in conjunctive normal form, which he refers to as an ``instance," to ``block" an assignment to the terminals in the formula. We provide Quigley's definition as stated in his paper and then clarify some of the terminology.
\begin{definition}[{\cite[Definition 3.1, p.~3]{qui:poly-sat}}]
    An assignment, $A$, is said to be blocked by a clause, $C$ if, given an instance containing $C$, there is no way that $A$ allows $C$ to evaluate to True, and thus there is no way $A$ allows the instance to evaluate to True.
\end{definition}

In other words, given a boolean formula in conjunctive normal form (i.e., an instance) containing some clause $C$, an assignment $A$ to the variables in the formula is blocked by the clause $C$ if $A$ makes the instance evaluate to false by making $C$ evaluate to false. This means a clause $C$ blocks an assignment $A$ if and only if $C$ evaluates to false under the assignment $A$. Quigley then uses this definition to define clause implication.

\begin{definition}[{\cite[Definition 3.2, p.~3]{qui:poly-sat}}] 
    A clause, $C$, is said to imply another clause, $D$, if all assignments blocked by $D$ are also blocked by $C$.
\end{definition}

This means that a clause $C$ in a boolean formula implies a clause $D$ if and only if for any complete assignment $A$ to the terminals in the formula, if $C$ evaluates to true under $A$, then $D$ evaluates to true under $A$.

\section{Analysis of Quigley's Arguments}

\subsection{Quigley's Rules for Implication}

Unless otherwise specified, when we refer to clauses as being ``implied" by other clauses, we will use Quigley's definition of implication.

Quigley provides several rules that his algorithm uses to find new clauses that are implied by existing clauses in a given boolean formula in 3CNF form. First, he defines ``expansion."

\begin{lemma}[{\cite[Lemma 5.8, p.~6]{qui:poly-sat}}]
    Given a clause, $C$, and a terminal, $t$, that's not in $C$, then two new clauses can be implied consisting of all the terms of $C$ appended to either the positive form of $t$ or the negated form of $t$.
\end{lemma}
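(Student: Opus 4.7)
The plan is to prove the lemma by directly unpacking Quigley's definitions of ``blocking'' and ``implication.'' Specifically, letting $C = \ell_1 \lor \ell_2 \lor \cdots \lor \ell_k$ and $t$ be a terminal not appearing in $C$, I want to show that both $C \lor t$ and $C \lor \overline{t}$ are implied by $C$, i.e., that every complete assignment blocked by either of these new clauses is also blocked by $C$.

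First, I would recall the useful reformulation already established in Section~\ref{quig-def}: an assignment $A$ blocks a clause $D$ exactly when every term of $D$ evaluates to false under $A$. With this in hand, the argument is almost immediate. Fix any complete assignment $A$ to the terminals of the underlying instance, and suppose $A$ blocks $C \lor t$. Then every term of $C \lor t$ evaluates to false under $A$; in particular, each of $\ell_1, \ldots, \ell_k$ evaluates to false under $A$, so $A$ blocks $C$. An identical argument, replacing $t$ with $\overline{t}$, shows that any assignment blocking $C \lor \overline{t}$ also blocks $C$. By Quigley's definition of implication, $C$ therefore implies both $C \lor t$ and $C \lor \overline{t}$.

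The main ``obstacle'' is really just a bookkeeping point rather than a genuine mathematical difficulty: I should remark that the hypothesis ``$t$ is not in $C$'' is not used in the chain of implications above and serves only to guarantee that the two resulting clauses are syntactically well-formed (no repeated terminals) and genuinely new, rather than collapsing back to $C$ or to a tautology. I would flag this explicitly so the reader sees that the implication holds as a purely logical fact about disjunction, and that the side condition is about the freshness of the expanded clauses. Since the statement of the lemma is just this direct consequence of the semantics of $\lor$, no case analysis, induction, or further machinery is needed to conclude the proof.
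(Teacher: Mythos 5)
Your proof is correct and follows essentially the same route as the paper's: unpack the definitions of blocking and implication, observe that any assignment making every term of $C \lor t$ (resp.\ $C \lor \overline{t}$) false must in particular make every term of $C$ false, and conclude that $C$ implies both expanded clauses. The paper phrases this slightly more generally (as the observation that $C$ implies any clause $D$ whose set of terms contains that of $C$) and then specializes, but the underlying argument is identical, and your side remark about the ``$t$ not in $C$'' hypothesis is an accurate clarification rather than a divergence.
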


With this lemma, Quigley defines the rule of ``expansion." For instance, this rule can be used to ``expand" an existing clause
\[ x_1 \lor \cdots \lor x_n \]
with some distinct terms $x_1, ..., x_n$
into two new clauses
\[ x_1 \lor \cdots \lor x_n \lor t \]
and
\[ x_1 \lor \cdots \lor x_n \lor \overline{t} \]
for some terminal $t$ not found in the original clause.

To prove that the new clauses are implied, let $C$ and $D$ be clauses such that the set of terms in $C$ is a subset of the set of terms in $D$. Now, let $A$ be an assignment to the terminals in $D$. If $D$ evaluates to false under this assignment, then every term in $D$ evaluates to false under $A$ since $D$ is a disjunction of terms. Therefore, since the set of terms in $C$ is a subset of the set of terms in $D$, $C$ must also evaluate to false because every term in $C$ evaluates to false. This means that every assignment blocked by $D$ is also blocked by $C$, so $C$ implies $D$.

Furthermore, Quigley defines a second rule similar to resolution in propositional logic.

\begin{lemma}[{\cite[Lemma 5.9, p.~6]{qui:poly-sat}}]\label{lemma-5.9}
    If two clauses share the same terminal, $t$, such that $t$ is positive in one clause and negated in the other, then these clauses imply a new clause which is composed of all the terms in both clauses except terms containing $t$.
\end{lemma}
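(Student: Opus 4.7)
The plan is to recast the statement as the resolution rule of propositional logic, interpreted within Quigley's ``blocking'' framework. Write the two clauses as $C_1 = t \lor \alpha$ and $C_2 = \overline{t} \lor \beta$, where $\alpha$ is the disjunction of the literals of $C_1$ other than $t$ and $\beta$ is the disjunction of the literals of $C_2$ other than $\overline{t}$. Let $R = \alpha \lor \beta$ denote the new clause described in the statement.

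Before any case analysis, I would make precise what it means for two clauses to \emph{jointly} imply a third, since Definition~3.2 as stated only covers one clause implying another. The natural interpretation, and the one Quigley's algorithm must rely on, is that every assignment blocked by $R$ is blocked by at least one of $C_1$ and $C_2$; this is exactly the condition needed to add $R$ to an instance already containing $C_1$ and $C_2$ without changing its satisfying assignments.

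The heart of the argument is then a short case analysis on the value $A$ assigns to $t$, for an arbitrary assignment $A$ that blocks $R$. Because $R$ is blocked, every literal in $\alpha$ and every literal in $\beta$ evaluates to false under $A$. If $A(t)$ is true, then $\overline{t}$ is false, so $C_2 = \overline{t} \lor \beta$ is a disjunction of literals all of which are false under $A$, meaning $A$ is blocked by $C_2$. If $A(t)$ is false, the symmetric argument shows that $A$ is blocked by $C_1$. In either case, $A$ is blocked by at least one of the two original clauses, which is precisely what joint implication requires.

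The main obstacle is conceptual rather than technical: bridging Quigley's single-clause implication from Definition~3.2 to the joint implication that Lemma~\ref{lemma-5.9} silently invokes. Once that reading is fixed, the case analysis is routine. I would also note briefly that the argument is robust to the two clauses sharing terminals other than $t$: duplicate literals in $R$ can be merged, and if some literal $x$ appears positively in $\alpha$ and negatively in $\beta$ (or vice versa) the resolvent is a tautology, which blocks no assignment and so is implied vacuously.
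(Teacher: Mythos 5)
Your proposal is correct. Note, though, that the paper does not supply its own proof of Quigley's Lemma~5.9: it identifies the rule as resolution, cites the known soundness of resolution, and only adds the remarks (echoed in your final paragraph) that duplicate terms are merged and that a resolvent containing both $x$ and $\overline{x}$ is a tautology implied vacuously. Your case analysis on the value of $t$ is precisely the standard soundness argument for resolution, so in substance you are filling in the citation rather than diverging from it. One point you add that the paper glosses over is the explicit extension of Definition~3.2 from single-clause to joint implication (every assignment blocked by $R$ is blocked by at least one of $C_1$, $C_2$); that reading is correct and is indeed what Quigley's algorithm needs when it conjoins $R$ to the instance, so making it explicit is a genuine clarification rather than an error.
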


With this lemma, Quigley defines a rule that is nearly identical to the definition of resolution in propositional logic, which is known to be sound (see \cite{nor-rus:aima}). In Quigley's proof of this lemma \cite{qui:poly-sat}, he states that if a term appears in both of the original clauses, the new clause contains that term exactly once. In other words, the new clause does not contain any duplicate terms. Note that if an implied clause contains two contradicting terms $x$ and $\overline{x}$ for some terminal $x$, then the clause evaluates to true under any assignment.

For instance, consider two clauses
\[ x_1 \lor \cdots \lor x_n \lor y_1 \lor \cdots \lor y_m \lor t \]
and
\[ x_1 \lor \cdots \lor x_n \lor z_1 \lor \cdots \lor z_k \lor \overline{t} \]
for some terms $x_1, ..., x_n, y_1, ..., y_m, z_1, ..., z_k$ and some terminal $t$ such that $x_1, ..., x_n, y_1, ..., y_m, t$ are all distinct, $x_1, ..., x_n, z_1, ..., z_k, \overline{t}$ are all distinct, and $x_1, ..., x_n$ are the only common terms between the two clauses. Then, using the rule of resolution, these two clauses imply a new clause
\[ x_1 \lor \cdots \lor x_n \lor y_1 \lor \cdots \lor y_m \lor z_1 \lor \cdots \lor z_k. \]

Quigley uses these rules of expansion and resolution to find new clauses that are implied by existing clauses in a 3CNF formula, and he bases his algorithm on these rules. We provide this algorithm in the following section.

Quigley also provides bounds on the minimum and maximum lengths of implied clauses, which are given below.

\begin{lemma}[{\cite[Lemma 5.10, p.~7]{qui:poly-sat}}]
    Given two clauses of lengths $k$ and $m$ that imply another clause by (Quigley's) Lemma 5.9, the length of the implied clause will fall in the range $\max(k, m) - 1$ to $k + m - 2$ where $\max(k, m)$ represents the parameter with the greatest value.
\end{lemma}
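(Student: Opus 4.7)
The plan is to prove the bounds by direct counting of the set of terms that end up in the resolvent, using the explicit description of the resolution rule given in Lemma~\ref{lemma-5.9}. Set up notation: write the two input clauses as $C_1$ and $C_2$ of lengths $k$ and $m$, with some terminal $t$ appearing positively in $C_1$ and negatively in $C_2$, and, without loss of generality, assume $k \geq m$. By Lemma~\ref{lemma-5.9}, together with Quigley's remark that a term appearing in both input clauses appears exactly once in the resolvent, the resolvent $R$ is the clause whose term set is
\[ S \;=\; \bigl(\mathrm{terms}(C_1) \setminus \{t\}\bigr) \;\cup\; \bigl(\mathrm{terms}(C_2) \setminus \{\overline{t}\}\bigr), \]
so its length equals $|S|$. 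The bounds then reduce to controlling $|S|$.

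For the upper bound, I would apply the basic inequality $|A \cup B| \leq |A| + |B|$ to the two sets above, which have sizes $k-1$ and $m-1$ respectively; this yields $|R| \leq (k-1)+(m-1) = k+m-2$. The bound is tight: taking $C_1 = x_1 \lor \cdots \lor x_{k-1} \lor t$ and $C_2 = y_1 \lor \cdots \lor y_{m-1} \lor \overline{t}$ with all $x_i, y_j$ distinct terminals not equal to $t$ makes the two sets disjoint and gives $|R| = k+m-2$ exactly.

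For the lower bound, I would observe that $\mathrm{terms}(C_1) \setminus \{t\} \subseteq S$, so $|R| \geq k-1 = \max(k,m) - 1$ by the assumption $k \geq m$. This bound is also tight: choose $C_2$ so that its non-$\overline{t}$ terms form a subset of the non-$t$ terms of $C_1$. Then $S = \mathrm{terms}(C_1) \setminus \{t\}$, giving $|R| = k-1$. A concrete small example such as $C_1 = x_1 \lor x_2 \lor t$ and $C_2 = x_1 \lor \overline{t}$ producing $R = x_1 \lor x_2$ illustrates the extreme case.

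The one subtlety to flag is whether $S$ could collapse further when some terminal other than $t$ appears positively in one clause and negatively in the other, producing a tautologous resolvent containing both $x$ and $\overline{x}$. Because $x$ and $\overline{x}$ are \emph{distinct} terms, no additional merging occurs in the term count, and the inequalities above are unaffected. There is no genuine obstacle in this argument; once the resolvent is identified as a union of two sets of sizes $k-1$ and $m-1$, both bounds follow from the elementary fact $\max(|A|,|B|) \leq |A \cup B| \leq |A|+|B|$.
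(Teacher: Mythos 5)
Your proof is correct and follows essentially the same route as the paper's discussion: the resolvent's term set is identified as $(\mathrm{terms}(C_1)\setminus\{t\})\cup(\mathrm{terms}(C_2)\setminus\{\overline{t}\})$, and the bounds follow by counting, with the maximum attained when the two sets are disjoint and the minimum when the smaller set is contained in the larger. Your version packages this as the elementary inequality $\max(|A|,|B|)\le|A\cup B|\le|A|+|B|$ and explicitly notes (correctly) that tautologous resolvents do not reduce the \emph{term} count, which is a nice clarification but not a departure from the paper's argument.
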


Quigley argues that the smallest clause that can be implied by a pair of clauses with lengths $k$ and $m$ is of length $\max(k,m)-1$. This occurs when all but one of the terms in one clause appears in the other. In this case, the implied clause contains all the terms from the larger clause except the one term that is removed during the resolution process. Since duplicate terms in implied clauses are removed, the terms that are shared between the two initial clauses appear only once in the implied clause. Similarly, Quigley states that the largest clause that can be implied by two clauses of lengths $k$ and $m$ has length $k + m - 2$. This occurs when no terms are shared between the two clauses, so the implied clause contains all but one term from each clause.

We found no errors in Quigley's proofs of his Lemmas 5.8, 5.9, and 5.10.

\subsection{Analysis of Quigley's Algorithm}\label{sec-analysis-of-alg}

In this section, we provide Quigley's algorithm, which he claims correctly classifies any instance of a 3CNF formula as satisfiable or unsatisfiable. Quigley's algorithm repeatedly iterates through the clauses of a 3CNF boolean formula to determine whether or not the formula is satisfiable. The algorithm constructs new clauses via his rules of expansion and resolution that are ``added" to the instance by conjoining the instance and the new clause with the logical AND operator. Note that the old clauses that are used to imply new clauses are still kept in the formula. We first provide Quigley's algorithm as stated in his paper, with minor adjustments in punctuation. We will later clarify more precisely how we believe his algorithm works.

Quigley's algorithm, as stated in his paper, is as follows \cite[p.~20]{qui:poly-sat}.

\begin{enumerate}
    \item For each clause in the instance, $C$, of length $3$ or less:
    \begin{enumerate}
        \item For each clause in the instance, $D$, of length $3$ or less:
        \begin{enumerate}
            \item Get all clauses implied by $C$ and $D$ according to (Quigley's) Lemma 5.9 and add them to the instance.
            \item Check if this new clause is in the instance and update a flag accordingly.
        \end{enumerate}
        \item Expand $C$ to get all possible clauses with a maximum length of $3$ and add them to the instance.
        \item For each new clause from the previous step:
        \begin{enumerate}
            \item Check if the new clause is in the instance.
        \end{enumerate}
    \end{enumerate}
    \item For each clause in the instance, $E$, of length $1$:
    \begin{enumerate}
        \item For each clause in the instance, $F$, of length $1$:
        \begin{enumerate}
            \item If $E$ and $F$ contain the same terminal in which it is positive in one clause and negated in the other, the clauses are contradicting and the instance is unsatisfiable, end.
        \end{enumerate}
    \end{enumerate}
    \item Repeat ($1$)-($2$) until no new clauses are added.
    \item If it reaches here, the instance is satisfiable, end.
\end{enumerate}

It is unclear whether clauses of length 4 or greater that are implied in step 1.a.i of Quigley's algorithm are added to the instance. Step 1.a.i states that all implied clauses are added to the instance, but Quigley's analysis of the runtime of step $3$ implies that these clauses are discarded since $O(n^3)$ clauses are added to the instance \cite[p.~21]{qui:poly-sat}. We assume implied clauses of length 4 or greater are simply not added to the instance because they will not be iterated over again, per the conditions in steps $1$ and $2$. This assumption only affects the runtime of the algorithm, not its correctness, and therefore does not affect the counterexample we provide in the next section.

It is also unclear whether or not clauses implied during step 1.a.i in each iteration of step 1 are used to imply more clauses in that same iteration of step 1. We assume that any new implied clauses are not iterated over during the same iteration of step $1$ in which they were previously implied because if they were, then all clauses that would be implied in future iterations of step $1$ would be implied during the first iteration. This would mean that step $3$ of Quigley's algorithm is unnecessary. Therefore, we also assume that any clauses implied during step 1.a.i are not iterated over until the next iteration of step 1.a.i. This assumption also does not affect the correctness of the algorithm or our counterexample in the next section.

With these assumptions, we believe Quigley's algorithm works as follows:

\begin{enumerate}
    \item Create a temporary list of clauses $L$, initialized to be empty.
    \item For each clause $C$ of length 3 or less in the instance:
    \begin{enumerate}
        \item For each clause $D$ of length 3 or less in the instance:
        \begin{enumerate}
            \item Let $V$ be the set of all clauses of length $3$ or less implied by $C$ and $D$ according to Quigley's Lemma 5.9. For each clause $v\in V$, if $v$ is not already in the instance and $v$ does not contain any contradicting terms, append $v$ to $L$.
        \end{enumerate}
        \item Expand $C$ to get all clauses implied by $C$ with a maximum length of 3 according to Quigley's Lemma 5.8, and append them to $L$.
    \end{enumerate}
    \item For each clause $l\in L$:
    \begin{enumerate}
        \item Check if $l$ is in the instance. If it is not, add it to the instance.
    \end{enumerate}
    \item For each clause in the instance, $E$, of length 1:
    \begin{enumerate}
        \item For each clause in the instance, $F$, of length 1:
        \begin{enumerate}
            \item If $E$ and $F$ contain the same terminal, which is positive in one clause and negated in the other, then the clauses are contradicting. Return that the instance is unsatisfiable.
        \end{enumerate}
    \end{enumerate}
    \item Repeat steps 1, 2, and 3 until no new clauses are added.
    \item If the algorithm reaches this step, return that the instance is satisfiable.
\end{enumerate}

We include the above description of the algorithm for clarity. However, in the following sections, when we refer to numbered steps of the algorithm, we will be referring to steps of Quigley's algorithm as originally stated in his paper and earlier in this section.

\subsubsection{Runtime of Quigley's Algorithm}\label{section:runtime}

Now, we will examine the runtime of Quigley's algorithm, as presented in his paper \cite[p.~20]{qui:poly-sat}. Quigley claims his algorithm runs in time $O(n^{12})$, which is polynomially bounded. He shows this by providing an upper bound for each step in his algorithm.

To provide these upper bounds, Quigley assumes that any particular clause examined or iterated over by his algorithm contains no repeated terminals. In particular, for any clause, any terminal $x$ found in the clause in the form of the term $x$ or the term $\overline{x}$ cannot be found elsewhere in the clause in either the positive or negated form; for instance, if the term $\overline{x}$ is in the clause, a duplicate of this term (i.e., $\overline{x}$) cannot be found elsewhere in the clause, and the term $x$ cannot be found in the clause. Note that this assumption applies only within clauses, so terminals may repeat between different clauses. Additionally, since the order of the terms in a clause does not affect the clause's truth value or its satisfiability, Quigley considers clauses containing the same terms in different orders to be the same clause.

In general, these assumptions may not be satisfied by some boolean formulas. In these cases, the polynomial runtime bound is not necessarily guaranteed simply because the input itself may have too many clauses and thus be too large to be polynomially bounded in the number of variables. However, any formula that does not satisfy these assumptions can be converted to an equivalent formula that does satisfy them. To do so, one can remove from the formula any clauses that contain both the term $x$ and the term $\overline{x}$ for some terminal $x$ since such clauses always evaluate to true and thus do not affect the satisfiability of the overall formula. Also, one can remove duplicates of the same term in each clause (e.g., if a term $x$ is found more than once in some clause, only one occurrence of $x$ needs to be included in that clause, so all other occurrences can be removed). Finally, one can remove repeated clauses that contain the same terms in different orders (e.g., if one clauses is $x_1 \lor x_2$ and another clause is $x_2 \lor x_1$, only one of these clauses needs to be included in the overall formula). Further, if the initial input satisfies the assumptions, then by following these same guidelines (i.e., not storing clauses that trivially evaluate to true due to containing a terminal in both positive and negated form, not storing duplicate terms in each clause, and not storing clauses that contain the same terms as some other clause already present in the formula) during the execution of Quigley's algorithm, one can guarantee that all future clauses implied by Quigley's algorithm, and thus all clauses iterated over by his algorithm, also satisfy the assumptions. Thus, to allow for a more meaningful and thorough analysis of Quigley's algorithm, we will assume inputs to his algorithm satisfy the assumptions.

Now, consider possible formulas satisfying the assumptions above that can be constructed using the terminals $x_1, ..., x_n$ for some $n \in \naturalnumberpositive$. Note that for any $n \in \naturalnumberpositive$ with $k \leq n$, there are $\binom{n}{k} \cdot 2^k$ distinct boolean clauses with exactly $k$ terms, in which the order of the terms does not matter and no terminal can be repeated more than once (as described previously), that can be constructed from $n$ distinct variables. This is because there are $\binom{n}{k}$ ways to choose $k$ of the $n$ variables and there are 2 ways for each of these $k$ variables to appear (i.e., either positive or negated).
Thus, a boolean formula with $n$ variables and clauses of length at most 3 has at most
\[ \binom{n}{3} \cdot 2^3 + \binom{n}{2} \cdot 2^2 + \binom{n}{1} \cdot 2^1 = O(n^3) \]
clauses. As such, steps $1$ and 1.a of Quigley's algorithm each take time $O(n^3)$ in the worst case as they must each iterate over all clauses in the boolean formula. Step 1.a.i takes $O(1)$ time since clauses cannot exceed length 3, so when resolving two such clauses, there are a constant number of terminals to compare. Step 1.a.ii iterates over all clauses and thus takes $O(n^3)$ time. Next, any clause of length $1$ can be expanded to $O(n^2)$ new clauses of length at most 3, and any clause of length $2$ can be expanded to $O(n)$ new clauses of length at most 3, so step 1.b takes $O(n^2)$ time. Then, step 1.c iterates over the $O(n^2)$ new clauses from step 1.b, and step 1.c.i compares each of them to the $O(n^3)$ other clauses in the formula. As such, the total time complexity of step $1$, including all sub-steps, is on the order of
\[ n^3 \cdot (n^3 \cdot (1 + n^3) + n^2 + n^2 \cdot n^3) = O(n^9). \]
Next, steps 2 and 2.a each iterate over $O(n^3)$ clauses, and step 2.a.i compares two clauses of length $1$ in $O(1)$ time, so the total time complexity of step $2$, including all sub-steps, is on the order of 
\[ n^3 \cdot n^3 \cdot 1 = O(n^6). \]
Step $3$ repeats steps $1$ and $2$ until no new clauses are added. Since there are a maximum of $O(n^3)$ new clauses that can be added in steps $1$ and $2$ (i.e., all possible clauses of length at most 3 formed from $n$ variables), step $3$ causes at most $O(n^3)$ repetitions (which, for instance, can occur when only $1$ new clause is added during each iteration). Finally, step 4 simply ends the algorithm and takes $O(1)$ time.

Based on the time complexity of each individual step, the total time complexity of Quigley's algorithm is the number of repetitions caused by step 3 times the sum of the time complexity of one iteration of each of steps 1 and 2, plus the time complexity of step 4. Thus, the overall time complexity is on the order of
\[ n^3 \cdot (n^9 + n^6) + 1 = O(n^{12}). \]

Under the assumptions stated earlier in this section, we find no issues with Quigley's time complexity analysis. However, even under these assumptions, we find that Quigley's algorithm does not always classify 3CNF formulas correctly. In particular, some unsatisfiable formulas are classified as satisfiable, which we prove in a later section.

\subsection{Analysis of Quigley's Lemmas}

To justify the correctness of his algorithm and provide a bound on its runtime, Quigley introduces several lemmas. In this section, we analyze the lemmas that are most relevant to understanding his algorithm and provide counterexamples. The first of these, Quigley's Lemma 5.11, is given below and has been rephrased and condensed slightly for clarity.

\begin{lemma}[{\cite[Lemma 5.11, p.~7]{qui:poly-sat}}]
    Consider some $k \in \mathbb{N}^+$ with $k \geq 2$. Consider clauses $A$, $B$, and $C$ of length less than $k$, a clause $D$ of length $k$ or $k-1$, and a clause $E$ of length $k$ such  that $A$ and $B$ imply $E$ by (Quigley's) Lemma 5.9 and $C$ and $E$ imply $D$ by (Quigley's) Lemma 5.9. Then, $A$, $B$, and $C$ can imply $D$ by processing only clauses with a maximum length of $k-1$.
\end{lemma}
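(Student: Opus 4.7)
The plan is to show that the two-step derivation $A,B\to E$, $E,C\to D$ can be replaced by another two-step derivation whose single intermediate clause has length at most $k-1$. Let $t_1$ be the pivot of the first resolution (so $t_1\in A$ and $\overline{t_1}\in B$ after relabeling), and let $t_2$ be the pivot of the second (so $t_2\in E$ and $\overline{t_2}\in C$). Since every term of $E$ must come from $A$ or $B$ with the pair $t_1,\overline{t_1}$ removed, the positive occurrence of $t_2$ in $E$ originates in $A$ or in $B$; by symmetry I assume $t_2\in A$.

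I would then propose the alternative derivation: first resolve $A$ with $C$ on $t_2$ to form
\[
F = (A\setminus\{t_2\})\cup(C\setminus\{\overline{t_2}\}),
\]
and then resolve $F$ with $B$ on $t_1$. The second resolution is legal since $t_1\in A\setminus\{t_2\}\subseteq F$ and $\overline{t_1}\in B$. A direct set-theoretic calculation (using that $t_1$ does not occur among the terms originating from $B$ and that duplicates are collapsed) shows that the resulting resolvent equals $(E\setminus\{t_2\})\cup(C\setminus\{\overline{t_2}\})=D$, so $F$ is the only intermediate clause produced en route to $D$.

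The main obstacle is bounding $|F|$. Lemma 5.10 alone only gives $|F|\leq |A|+|C|-2\leq 2k-4$, which is strictly larger than $k-1$ whenever $k\geq 4$. To tighten this, I would invoke the hypotheses $|E|=k$ and $|D|\leq k$. Writing $A=A'\cup\{t_1\}$, $B=B'\cup\{\overline{t_1}\}$, $C=C'\cup\{\overline{t_2}\}$, the equation $|E|=|A'\cup B'|=k$ constrains the overlap $|A'\cap B'|$, while $|D|=|(A'\setminus\{t_2\})\cup B'\cup C'|\leq k$ forces $C'$ to overlap substantially with $(A'\setminus\{t_2\})\cup B'$. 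Since $|F|=|(A'\setminus\{t_2\})\cup\{t_1\}\cup C'|$, combining these constraints, via a case analysis on where that forced overlap lies (inside $A'\setminus\{t_2\}$ versus inside $B'$), should yield $|F|\leq k-1$.

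I expect this final length bound to be the principal difficulty. If the two constraints $|E|=k$ and $|D|\leq k$ do not jointly force $|F|\leq k-1$ in Case~1, I would fall back on the symmetric Case~2 (where $t_2\in B$ and we resolve $B$ with $C$ first), choosing whichever of the two alternatives produces the shorter intermediate; as a second fallback, I would try to reroute through shorter clauses obtained from Lemma~5.8 (expansion). Everything else in the argument—identifying the pivots, verifying that the alternative resolvent equals $D$, and checking the applicability of Lemma~5.9 at each step—is routine; the real content of the lemma rests entirely on whether at least one choice of resolution order always drives the intermediate clause length down to $k-1$.
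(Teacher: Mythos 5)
The statement you are trying to prove is Quigley's Lemma~5.11, and the paper's treatment of it is not a proof but a \emph{refutation}: the lemma is false, and the paper exhibits an explicit counterexample. Take $k=4$ and distinct terminals $a_1,\dots,a_5$, with $A=(a_1\lor a_2\lor a_3)$, $B=(\overline{a_1}\lor a_4\lor a_5)$, $C=(\overline{a_1}\lor\overline{a_2}\lor a_4)$, $E=(a_2\lor a_3\lor a_4\lor a_5)$, and $D=(\overline{a_1}\lor a_3\lor a_4\lor a_5)$. Then $A,B$ resolve on $a_1$ to give $E$, and $C,E$ resolve on $a_2$ to give $D$, so the hypotheses are met. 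Yet the only pairwise resolutions available among $A,B,C$ are: $A,B$ on $a_1$, which yields $E$ of length~$4$; and $A,C$ (on $a_1$ or $a_2$), which yields a tautology. So no clause of length $\le 3$ is derivable from $A,B,C$ at all, and hence $D$ cannot be reached without processing a clause of length $\ge k$.

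Now run your proposed reroute on this instance. We have $t_1=a_1$, $t_2=a_2$, and $t_2\in A$, so you are in Case~1 and form $F=(A\setminus\{a_2\})\cup(C\setminus\{\overline{a_2}\})=\{a_1,\overline{a_1},a_3,a_4\}$. This has $k$ terms, not $k-1$, and worse, it contains the complementary pair $a_1,\overline{a_1}$, so it is a tautology and (per Quigley's own conventions, which discard such clauses) cannot be used as an intermediate at all. Your fallback Case~2 does not apply since $a_2\notin B$, and your second fallback via Lemma~5.8 cannot help because expansion only makes clauses longer. You correctly flagged the length bound $|F|\le k-1$ as ``the principal difficulty,'' but the difficulty is not a gap awaiting a clever argument: the constraints $|E|=k$ and $|D|\le k$ do not force $|F|\le k-1$, because the asserted bound is simply not true. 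Any proof attempt for this lemma must fail at exactly the place you were worried about.
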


This lemma states that for any $k\in\naturalnumberpositive$ with $k\geq 2$, the given combination of clauses $A$, $B$, and $C$ meeting certain criteria can imply some other clause $D$ by using only clauses of length less than $k$ as intermediate clauses. However, consider the following counterexample to this lemma. Let $k = 4$, and let $a_1, a_2, a_3, a_4, a_5$ be distinct terminals. Let the clause $A$ be $(a_1 \lor a_2 \lor a_3)$, the clause $B$ be $(\overline{a_1} \lor a_4 \lor a_5)$, the clause $C$ be $(\overline{a_1} \lor \overline{a_2} \lor a_4)$, the clause $D$ be $(\overline{a_1} \lor a_3 \lor a_4 \lor a_5)$, and the clause $E$ be $(a_2 \lor a_3 \lor a_4 \lor a_5)$.
Notice that each of the clauses $A$, $B$, and $C$ has length $k-1$, each of the clauses $D$ and $E$ has length $k$, the clauses $A$ and $B$ imply $E$ by resolving using the terminal $a_1$, and the clauses $C$ and $E$ imply $D$ by resolving using the terminal $a_2$. As such, these five clauses satisfy the hypotheses of Quigley's Lemma 5.11. However, there are no clauses of length less than $k$ implied by just $A$, $B$, and $C$, and no two of these clauses directly imply $D$. Thus, in order for $A$, $B$, and $C$ to imply $D$, an intermediate clause of length at least $k$ must be processed, contradicting Quigley's Lemma 5.11.

In fact, we can extend this counterexample to any arbitrary $k \geq 4$. Let $a_1, ..., a_{k+1}$ be distinct terminals. Let the clause $A$ be $(a_1 \lor a_2 \lor \cdots \lor a_{k-1})$, the clause $B$ be $(\overline{a_1} \lor a_4 \lor a_5 \lor \cdots \lor a_{k+1})$, the clause $C$ be $(\overline{a_1} \lor \overline{a_2} \lor a_4 \lor a_5 \lor \cdots \lor a_k)$, the clause $D$ be $(\overline{a_1} \lor a_3 \lor a_4 \lor \cdots \lor a_{k+1})$, and the clause $E$ be $(a_2 \lor a_3 \lor \cdots \lor a_{k+1})$. As in the example with $k = 4$ given earlier, notice that each of the clauses $A$, $B$, and $C$ has length $k-1$, each of the clauses $D$ and $E$ has length $k$, the clauses $A$ and $B$ imply $E$ by resolving the terminal $a_1$, and the clauses $C$ and $E$ imply $D$ by resolving the terminal $a_2$. As such, these five clauses satisfy the hypotheses of Quigley's Lemma 5.11. However, there are no clauses of length less than $k$ implied by just $A$, $B$, and $C$, and no two of these clauses directly imply $D$. Thus, in order for $A$, $B$, and $C$ to imply $D$, an intermediate clause of length at least $k$ must be processed, contradicting Quigley's Lemma 5.11.

Next, consider Quigley's Lemma 5.17, which is given below and has been rephrased slightly for clarity.

\begin{lemma}[{\cite[Lemma 5.17, p.~12]{qui:poly-sat}}]
    Consider some $k \in \mathbb{N}^+$ with $k \geq 2$. Consider clauses $A$, $B$, $C$, and $D$ of length less than $k$, clauses $E$ and $F$ of length $k$, and a clause $G$ of length $k$ or $k-1$ such that $A$ and $B$ imply $E$ by (Quigley's) Lemma 5.9, $C$ and $D$ imply $F$ by (Quigley's) Lemma 5.9, and $E$ and $F$ imply $G$ by (Quigley's) Lemma 5.9. Then, $A$, $B$, $C$, and $D$ can imply $G$ by processing only clauses with a maximum length of $k-1$.
\end{lemma}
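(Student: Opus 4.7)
The plan is to prove the lemma by showing that the outer resolution step, which produces $G$ from $E$ and $F$, can be commuted past the two inner resolutions that produce $E$ and $F$, so that $G$ can be derived from $A$, $B$, $C$, and $D$ without ever constructing either of the length-$k$ clauses $E$ or $F$. Let $t_1$ be the terminal resolved upon to form $E$ from $A$ and $B$, let $t_2$ be the terminal resolved upon to form $F$ from $C$ and $D$, and let $t_3$ be the terminal resolved upon to form $G$ from $E$ and $F$. Because $t_3$ survives into $G$, it is distinct from $t_1$ and from $t_2$, and by the definition of resolution it appears in $E$ and $F$ in opposite polarities. Tracing those occurrences back through the construction of $E$ and $F$, $t_3$ must appear in at least one of $\{A, B\}$ (say positively in $A$, without loss of generality) and, in negated form, in at least one of $\{C, D\}$ (say in $C$).

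From here I would proceed by case analysis on which of the four initial clauses contain $t_3$ and $\overline{t_3}$. In the representative case above, I would first resolve $A$ with $C$ on $t_3$ to obtain an intermediate clause $H$ consisting of the terms of $A$ and $C$ other than $t_3$ and $\overline{t_3}$. Because $t_1 \neq t_3$ and $t_2 \neq t_3$, the literal of $t_1$ appearing in $A$ is inherited by $H$, and likewise for $t_2$ from $C$. I would then resolve $H$ with $B$ on $t_1$ to eliminate $t_1$, and finally resolve the resulting clause with $D$ on $t_2$ to eliminate $t_2$. After collapsing duplicate terms, the final clause is either equal to $G$ or a subclause of $G$, which by Quigley's Lemma 5.8 (expansion) suffices to imply $G$. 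If the three intermediate clauses produced along this path all have length at most $k-1$, the lemma follows, and the remaining cases (swapping $A \leftrightarrow B$ or $C \leftrightarrow D$) are symmetric.

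The main obstacle, which I expect to actually derail this strategy, is bounding the length of the first intermediate clause $H$. By Quigley's Lemma 5.10, $H$ can have length as large as $|A| + |C| - 2$, and since $A$ and $C$ each have length up to $k-1$, this bound can be as large as $2k - 4$. For every $k \geq 4$, $2k-4$ strictly exceeds $k-1$, so in the worst case (when $A$ and $C$ share no terminal other than $t_3$) the very first intermediate clause already violates the length restriction the lemma is trying to maintain. Every alternative ordering I can think of (resolving on $t_1$ or on $t_2$ first, then working outward) runs into a symmetric difficulty, because the four initial clauses collectively carry all the literals that must end up in $G$, so any resolution sequence that avoids forming $E$ and $F$ must merge the ``$AB$ side'' with the ``$CD$ side'' somewhere without the benefit of the $t_1$ or $t_2$ cancellations. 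Given the close structural parallel with Quigley's Lemma 5.11, which the preceding section refutes by exactly this sort of length blow-up, I suspect Lemma 5.17 is likewise false once $k$ is large enough, and that no rearrangement of the resolution order will rescue the natural proof.
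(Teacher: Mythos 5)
You have correctly gauged the situation: this is one of Quigley's lemmas that the paper refutes rather than proves, and your instinct that the natural commutation of resolutions cannot be salvaged is sound. However, your argument stops short of an actual refutation. Showing that one particular resolution order can produce an intermediate clause of length up to $2k-4$ does not rule out every derivation path---one must either exhaustively eliminate all orderings (including those that interleave expansion by Quigley's Lemma 5.8) or exhibit a concrete instance on which the conclusion fails. The paper does the latter. Taking $k=4$ with distinct terminals $a_1,\ldots,a_6$, it sets $A = (a_1 \lor a_2 \lor a_5)$, $B = (a_3 \lor a_4 \lor \overline{a_5})$, $C = (\overline{a_1} \lor a_2 \lor a_6)$, $D = (a_3 \lor a_4 \lor \overline{a_6})$, $E = (a_1 \lor a_2 \lor a_3 \lor a_4)$, $F = (\overline{a_1} \lor a_2 \lor a_3 \lor a_4)$, and $G = (a_2 \lor a_3 \lor a_4)$. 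Then $A$ and $B$ resolve on $a_5$ to give $E$, $C$ and $D$ resolve on $a_6$ to give $F$, and $E$ and $F$ resolve on $a_1$ to give $G$, so all hypotheses hold with the required lengths. Yet among $A$, $B$, $C$, $D$ the only resolvent of length less than $k$ is $(a_2 \lor a_5 \lor a_6)$ (exactly your $H$, from $A$ and $C$ on $a_1$), every further resolution among $A$, $B$, $C$, $D$, and that clause produces a clause of length $4$, and none of the available clauses implies $G$ directly; hence every derivation of $G$ must process a clause of length at least $k$.

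Note also that the concrete failure mode is slightly different from what you predicted. In the paper's counterexample the first intermediate $H$ has length exactly $k-1=3$, not $2k-4=4$: the overlap of $A$ and $C$ on $a_2$ saves one term, so your worst-case bound is not what bites. The blowup arrives only at the \emph{next} step. Resolving $H = (a_2 \lor a_5 \lor a_6)$ with $B$ on $a_5$, or with $D$ on $a_6$, yields a clause of length $4=k$, because the two terms $a_3, a_4$ carried by $B$ (or $D$) must be absorbed while only a single term of $H$ is cancelled. So it is not the initial merge of the $A$--$C$ sides that is fatal, but the subsequent attempt to eliminate the auxiliary resolvents $t_1$ and $t_2$; the counterexample is engineered precisely so that this cannot be done within the length bound, and that unavoidability---not the loose $2k-4$ estimate---is what a refutation must establish.
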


This lemma states that for any $k \in \naturalnumberpositive$ with $k \geq 2$, the given combination of clauses $A$, $B$, $C$, and $D$ meeting certain criteria can imply some other clause $G$ using only clauses of length less than $k$ as intermediate clauses. However, consider the following counterexample. Let $k = 4$, and let $a_1, a_2, a_3, a_4, a_5, a_6$ be distinct terminals. Let the clause $A$ be $(a_1 \lor a_2 \lor a_5)$, the clause $B$ be $(a_3 \lor a_4 \lor \overline{a_5})$, the clause $C$ be $(\overline{a_1} \lor a_2 \lor a_6)$, the clause $D$ be $(a_3 \lor a_4 \lor \overline{a_6})$, the clause $E$ be $(a_1 \lor a_2 \lor a_3 \lor a_4)$, the clause $F$ be $(\overline{a_1} \lor a_2 \lor a_3 \lor a_4)$, and the clause $G$ be $(a_2 \lor a_3 \lor a_4)$. Notice that each of the clauses $A$, $B$, $C$, $D$, and $G$ has length $k-1$, each of the clauses $E$ and $F$ has length $k$, the clauses $A$ and $B$ imply $E$ by resolving the terminal $a_5$, the clauses $C$ and $D$ imply $F$ by resolving the terminal $a_6$, and the clauses $E$ and $F$ imply $G$ by resolving the terminal $a_1$. As such, these clauses satisfy the hypotheses of Quigley's Lemma 5.17. However, the only clause of length less than $k$ implied by some combination of the clauses $A$, $B$, $C$, and $D$ is $(a_2 \lor a_5 \lor a_6)$, which is implied by $A$ and $C$ by resolving the terminal $a_1$. There are no further clauses implied by any of $A$, $B$, $C$, $D$, and the new clause $(a_2 \lor a_5 \lor a_6)$, and no two of these clauses directly imply $G$. Thus, in order for $A$, $B$, $C$, and $D$ to imply $G$, at least one intermediate clause of length at least $k$ must be processed, contradicting Quigley's Lemma 5.17.

Finally, consider Quigley's Lemma 5.18, which is given below and has been rephrased slightly for clarity.

\begin{lemma}[{\cite[Lemma 5.18, p.~15]{qui:poly-sat}}]
    Consider some $k \in \mathbb{N}^+$ with $k \geq 2$. Consider clauses $A$, $B$, and $C$ of length less than $k$, clauses $D$ and $E$ of length $k$, and a clause $F$ of length $k$ or $k-1$ such that $A$ and $B$ imply $D$ by (Quigley's) Lemma 5.9, $C$ expands to $E$ by (Quigley's) Lemma 5.8, and $D$ and $E$ imply $F$ by (Quigley's) Lemma 5.9. Then, $A$, $B$, and $C$ can imply $F$ by processing only clauses with a maximum length of $k-1$.
\end{lemma}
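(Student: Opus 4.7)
The plan is to exhibit an explicit counterexample for $k = 4$, following the same template the paper uses for Quigley's Lemmas 5.11 and 5.17: produce concrete clauses $A, B, C, D, E, F$ that satisfy every hypothesis of Lemma 5.18, and then show that $F$ cannot be derived from $A$, $B$, and $C$ by any sequence of Quigley's rules whose intermediate clauses all have length at most $k - 1 = 3$. Because the lemma is universally quantified, a single well-chosen instance suffices to refute it.

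For the construction I would take five distinct terminals $a_1, a_2, a_3, a_4, a_5$ and set $A = (\overline{a_1} \lor a_2 \lor a_5)$, $B = (a_3 \lor a_4 \lor \overline{a_5})$, $C = (a_1 \lor a_3 \lor a_4)$, $D = (\overline{a_1} \lor a_2 \lor a_3 \lor a_4)$, $E = (a_1 \lor a_2 \lor a_3 \lor a_4)$, and $F = (a_2 \lor a_3 \lor a_4)$. Verifying the hypothesis is routine: $A$, $B$, and $C$ each have length $3 < k$; $D$ and $E$ have length $k$; $F$ has length $k - 1$; $A$ and $B$ imply $D$ by resolving on $a_5$; $C$ expands to $E$ by appending $a_2$; and $D$ and $E$ imply $F$ by resolving on $a_1$.

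The main obstacle, and the heart of the argument, is ruling out every derivation of $F$ from $\{A, B, C\}$ whose intermediate clauses all have length at most $3$. I would handle this by a short exhaustive case analysis on the possible first rule applications. Resolution between $B$ and $C$ is impossible, since their only shared terminals $a_3$ and $a_4$ appear positively in both; the pair $A, B$ resolves on $a_5$ to $D$ of length $4$; and $A, C$ resolves on $a_1$ to $(a_2 \lor a_5 \lor a_3 \lor a_4)$, also of length $4$. Both resolvents exceed the length bound and so cannot be reused as inputs. Expansion of any clause in $\{A, B, C\}$, each of length $3$, immediately produces a clause of length $4$, again violating the bound. Hence the set of clauses of length at most $3$ derivable from $A$, $B$, $C$ under the restriction is exactly $\{A, B, C\}$ itself, and no single permitted rule application on this set produces $F$: every pairwise resolvent has length $4$ while $F$ has length $3$, and expansion only increases length. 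Therefore $F$ is unreachable while processing only clauses of length at most $k - 1$, contradicting the conclusion of Lemma 5.18 on this instance.
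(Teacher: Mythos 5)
Your proof is correct and takes essentially the same approach as the paper's: you exhibit a concrete $k=4$ counterexample and then rule out any derivation of $F$ from $\{A,B,C\}$ using only intermediate clauses of length at most $3$. In fact, your clauses are the paper's own (clause $C$ being $(\overline{a_1}\lor a_3\lor a_4)$ there) up to flipping the sign of $a_1$ throughout, and your explicit case analysis of the pairwise resolvents and expansions simply fills in the detail behind the paper's observation that no clauses of length less than $k$ are implied by $A$, $B$, and $C$.
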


This lemma states that for any $k \in \naturalnumberpositive$ with $k \geq 2$, the given combination of clauses $A$, $B$, and $C$ meeting certain criteria can imply some other clause $F$ using only clauses of length less than $k$ as intermediate clauses. However, consider the following counterexample. Let $k = 4$, and let $a_1, a_2, a_3, a_4, a_5$ be distinct terminals. Let the clause $A$ be $(a_1 \lor a_2 \lor a_5)$, the clause $B$ be $(a_3 \lor a_4 \lor \overline{a_5})$, the clause $C$ be $(\overline{a_1} \lor a_3 \lor a_4)$, the clause $D$ be $(a_1 \lor a_2 \lor a_3 \lor a_4)$, the clause $E$ be $(\overline{a_1} \lor a_2 \lor a_3 \lor a_4)$, and the clause $F$ be $(a_2 \lor a_3 \lor a_4)$. Notice that each of the clauses $A$, $B$, $C$, and $F$ has length $k-1$, each of the clauses $D$ and $E$ has length $k$, the clauses $A$ and $B$ imply $D$ by resolving the terminal $a_5$, the clause $C$ expands to $E$ by adding an $a_2$ term, and the clauses $D$ and $E$ imply $F$ by resolving the terminal $a_1$. As such, these clauses satisfy the hypotheses of Quigley's Lemma 5.18. However, there are no clauses of length less than $k$ implied by just $A$, $B$, and $C$. Thus, in order for $A$, $B$, and $C$ to imply $F$, at least one clause of length $k$ must be processed, therefore contradicting Quigley's Lemma 5.18.

As such, we can find counterexamples to each of Quigley's Lemmas 5.11, 5.17, and 5.18, including counterexamples of arbitrary lengths to Quigley's Lemma 5.11. Thus, since Quigley's proof of the correctness of his algorithm relies on the implication discussed in these lemmas always being possible without processing clauses of length $k$ or more, then this proof is flawed. This means Quigley has not demonstrated that his algorithm correctly classifies 3CNF formulas. Thus, he has not demonstrated that $\p = \np$.

\subsection{A Counterexample to Quigley's Algorithm}\label{counterexample-base-case}
In the previous section, we showed that Quigley's argument is flawed. Now, we give a counterexample on which his algorithm fails.

Let $\phi$ be an unsatisfiable boolean formula in 4CNF form with $n$ clauses such that every clause contains exactly 4 terms and no clause in $\phi$ contains the same terminal more than once. Let $\Sigma_1$ denote the set of terminals that appear in $\phi$, and let $\Sigma_2$ denote the set of terms that appear in $\phi$. Now, we will construct a new boolean formula $\phi'$ in 3CNF form as follows:
\begin{enumerate}
    \item For each clause $c$ in $\phi$:
    \begin{enumerate}
        \item Let $a_1,a_2,a_3,a_4$ denote the first, second, third, and fourth terms in $c$, respectively.
        \item Construct two new clauses $(a_1\lor a_2\lor x_i)$ and $(a_3\lor a_4 \lor \overline{x_i})$, where $x_i$ is a new terminal that does not appear in $\phi$ and $i$ is the index of the clause $c$ in $\phi$.
    \end{enumerate}
    \item Take the conjunction of the clauses constructed in step 1.b. Let $\phi'$ be the 3CNF formula created this way.
\end{enumerate}

Note that, by this construction, $\phi'$ is of the form
\begin{equation*}
    (a_{1,1}\lor a_{1,2}\lor x_1) \land(a_{1,3}\lor a_{1,4}\lor \overline{x_1})\land\dots\land(a_{n,1}\lor a_{n,2}\lor x_n)\land(a_{n,3}\lor a_{n,4}\lor\overline{x_n})
\end{equation*}
where for all $1\leq i\leq n$, the terms $a_{i,1}$, $a_{i,2}$, $a_{i,3}$, and $a_{i,4}$ are all distinct terminals. Notice that the new third terminal $x_i$ is unique to the two clauses constructed from each original clause in $\phi$, so there are only two clauses in $\phi'$ that contain the terminal $x_i$, one in which $x_i$ is positive and another in which it is negated. This means there is exactly one clause in $\phi'$ containing the term $x_i$ and exactly one clause in $\phi'$ containing the term $\overline{x_i}$. Further, these two clauses that share the terminal $x_i$ cannot share any other terminals because the original clause they were constructed from contains no duplicate terminals by assumption. We denote the new third term $x_i$ or $\overline{x_i}$ of any clause $c \in \phi'$ as $x_c$. Let $X_1$ be the set of all terminals that appear in the new third term of some clause in $\phi'$, and let $X_2$ be the set of new third terms.

Note that any pair of adjacent clauses that share the same new third terminal implies a clause of length 4. This is because any two clauses $c_1$ and $c_2$ in $\phi'$ that share some terminal $x \in X_1$ cannot share any other terminals since $c_1$ and $c_2$ must have been constructed from some clause in the original formula $\phi$ and this original clause cannot have contained any duplicate terminals by assumption. Thus, $c_1$ is of the form $(a_{i,1}\lor a_{i,2}\lor x_i)$ and $c_2$ is of the form $(a_{i,3}\lor a_{i,4}\lor\overline{x_i})$ for some $1\leq i\leq n$, where the terms $a_{i,1}, a_{i,2}, a_{i,3}, a_{i,4}$ contain no duplicate terminals by assumption, meaning that $c_1$ and $c_2$ imply the clause $(a_{i,1}\lor a_{i,2}\lor a_{i,3}\lor a_{i,4})$, which is of length 4.

Now, we will show that $\phi'$ is unsatisfiable.

\begin{lemma}\label{phi'-unsat}
    The 3CNF formula $\phi'$ constructed as described previously is unsatisfiable.
\end{lemma}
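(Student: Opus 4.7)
The plan is to prove the contrapositive: I will assume $\phi'$ has a satisfying assignment $A'$ and construct from it a satisfying assignment $A$ for $\phi$, which contradicts the hypothesis that $\phi$ is unsatisfiable. The construction of $A$ is the natural one: restrict $A'$ to the original terminals in $\Sigma_1$, ignoring the values $A'$ assigns to the auxiliary terminals in $X_1$.

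Next, I would show that this restricted assignment $A$ satisfies every clause $c$ of $\phi$. Fix an arbitrary clause $c = (a_{i,1} \lor a_{i,2} \lor a_{i,3} \lor a_{i,4})$ of $\phi$, indexed by $i$. By construction, $\phi'$ contains the two clauses $(a_{i,1} \lor a_{i,2} \lor x_i)$ and $(a_{i,3} \lor a_{i,4} \lor \overline{x_i})$, both of which must evaluate to true under $A'$. The argument then splits on the value $A'(x_i)$: if $A'(x_i)$ is true, then the term $\overline{x_i}$ is false under $A'$, so at least one of $a_{i,3}, a_{i,4}$ must be true under $A'$; if $A'(x_i)$ is false, then the term $x_i$ is false under $A'$, so at least one of $a_{i,1}, a_{i,2}$ must be true under $A'$. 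In either case, at least one of the four terms $a_{i,1}, a_{i,2}, a_{i,3}, a_{i,4}$ is true under $A'$, and since these terms involve only terminals in $\Sigma_1$, they evaluate to the same value under $A$ as under $A'$. Therefore $c$ evaluates to true under $A$.

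Since $c$ was arbitrary, $A$ satisfies every clause of $\phi$, so $A$ is a satisfying assignment to $\phi$, contradicting the assumption that $\phi$ is unsatisfiable. Hence no satisfying assignment $A'$ to $\phi'$ can exist, and $\phi'$ is unsatisfiable.

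I do not expect any step of this argument to be a serious obstacle: the construction is a standard ``split each $4$-clause by a fresh auxiliary variable'' reduction, and the proof reduces to a two-case analysis on the auxiliary variable $x_i$. The only thing to be careful about is making sure we appeal to the construction exactly, noting that each $x_i$ is fresh and appears in precisely the two clauses we reason about, so restricting $A'$ to $\Sigma_1$ is well-defined and the values of the $a_{i,j}$ are unchanged by the restriction.
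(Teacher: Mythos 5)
Your proof is correct and is in essence the contrapositive of the paper's argument: both hinge on restricting an assignment to the original terminals $\Sigma_1$ and a two-case split on the value of each fresh variable $x_i$ to move a falsifying (respectively, satisfying) clause between $\phi$ and the corresponding pair $c_1, c_2$ in $\phi'$. The paper argues directly (fixing an arbitrary complete assignment to $\phi'$, restricting to $\Sigma_1$, and using unsatisfiability of $\phi$ to exhibit a false clause), whereas you argue by contradiction (assuming a satisfying $A'$ for $\phi'$ and showing its restriction satisfies $\phi$), but the mathematical content is identical.
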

\begin{proof}
    Recall that $\phi$ is unsatisfiable, so for each possible complete assignment $A$ to the variables in $\phi$, there exists some clause $c$ in $\phi$ that evaluates to false under that assignment. Since $c$ evaluates to false and $c$ is a disjunction of 4 terms, each of those terms must evaluate to false. Now, let $c_1$ and $c_2$ be the two clauses in $\phi'$ that are constructed from $c$ during step 1.b of the construction of $\phi'$. By definition, $c_1$ and $c_2$ each contains half the terms of $c$, all of which evaluate to false under the partial assignment $A$, along with a new terminal $x_i \in X_1$, which appears positive in one of the clauses $c_1$ and $c_2$ and negated in the other. Without loss of generality, suppose $x_i$ is positive in $c_1$ and negated in $c_2$. Then, under the partial assignment $A$, $c_1$ evaluates to $(F\lor\ldots\lor F\lor x_i)$, and $c_2$ evaluates to $(F\lor\ldots\lor F \lor\overline{x_i})$. Now, under any complete assignment to the variables in $\phi'$, $x_i$ must evaluate to either true or false. If $x_i$ is true, then $\overline{x_i}$ is false, so $c_2$ evaluates to $(F\lor\ldots\lor F)$, which is simply false; and if $x_i$ is false, then $c_1$ evaluates to $(F\lor\ldots\lor F)$, which is false. As such, any assignment to the variable $x_i$, and thus any complete assignment to the variables of $\phi'$, results in at least one clause in $\phi'$ evaluating to false, so $\phi'$ is unsatisfiable.
\end{proof}

However, we will now show that Quigley's algorithm classifies $\phi'$ as satisfiable.

\begin{theorem}\label{base-case-theorem}
    The 3CNF formula $\phi'$ is classified as satisfiable by Quigley's algorithm, as described in Section 6 of his paper \cite{qui:poly-sat}. Thus, Quigley's algorithm fails on $\phi'$.
\end{theorem}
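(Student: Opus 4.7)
My plan is to show that Quigley's algorithm on $\phi'$ never reaches step 2.a.i (``unsatisfiable'') and therefore terminates at step 4 classifying $\phi'$ as satisfiable. Since termination is already guaranteed by the runtime analysis in Section~\ref{section:runtime}, the entire task is to rule out the appearance of any pair of contradictory unit clauses in the instance.

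The tool will be a structural invariant maintained for every clause of length at most 3 that the algorithm ever stores, established by induction on the number of executions of step 3. My candidate invariant is that every such clause contains at least one term whose underlying terminal lies in $X_1$. The base case is immediate from the construction, which places exactly one term from $X_2$ in each initial clause. For the inductive step, expansion via Quigley's Lemma 5.8 only appends a term, so an $X_1$-term of the input is preserved. Resolution via Quigley's Lemma 5.9 on an original terminal $t\in\Sigma_1$ leaves every $X_1$-term of both parents intact in the resolvent, so the invariant is preserved there as well.

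The delicate case is resolution on an $X_1$-terminal $t\in X_1$, because both $t$ and $\overline{t}$ are eliminated from the parents and one could in principle produce a short resolvent with no $X_1$-term. Here I would exploit the rigidity of the construction of $\phi'$: the only initial clauses containing $x_i$ or $\overline{x_i}$ are the pair $(a_{i,1}\lor a_{i,2}\lor x_i)$ and $(a_{i,3}\lor a_{i,4}\lor \overline{x_i})$, and since the original clause of $\phi$ had no repeated terminal, these two clauses share no original terms. Resolving them on $x_i$ yields only the length-4 clause $(a_{i,1}\lor a_{i,2}\lor a_{i,3}\lor a_{i,4})$, which the algorithm discards. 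For any other pair of stored clauses containing $x_i$ and $\overline{x_i}$, I would trace their derivations backwards and show that at least one of them must already carry an additional $X_1$-term accumulated from a previous resolution; that extra term then survives into the resolvent and preserves the invariant.

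Once the invariant is proved, every unit clause the algorithm could store has the form $(x_j)$ or $(\overline{x_j})$. To rule out a contradictory pair of such unit clauses, I would further refine the invariant, most cleanly by arguing that no clause of length strictly less than 3 ever appears: any resolution that would yield a clause of length 2 or 1 requires substantial term overlap between the two parents, and the construction of $\phi'$ (where each $x_i$ is unique to its pair and the two partners share no original terms) forbids such overlap from ever arising through the allowed length-bounded derivations. I expect this final refinement, together with the $X_1$-resolution case above, to be the main obstacle---the argument must track enough of the $X_1$-ancestry of each derived clause to confirm that the length-3 cap genuinely blocks the width-4 unsatisfiability structure of $\phi$ from manifesting as a refutation of $\phi'$.
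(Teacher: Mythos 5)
Your overall strategy---rule out the algorithm ever reaching step~2.a.i by a structural invariant on derived clauses---is in the same spirit as the paper's proof, which also rests on controlling which clauses can be derived. However, as written, the two load-bearing steps of your plan are exactly the ones you explicitly defer, and they are where all the content lies.

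First, you correctly observe that the ``$X_1$-term present'' invariant alone is insufficient: it still permits unit clauses $(x_j)$ and $(\overline{x_j})$ to both be stored, so you append a second, stronger claim that no clause of length strictly less than $3$ ever appears. But note that this second claim \emph{subsumes} the first for your purpose: if no clause of length below $3$ is ever stored, step~2 has nothing to iterate over and the algorithm cannot declare unsatisfiability, regardless of which terminals appear where. So the $X_1$-invariant is a detour, and your entire argument reduces to proving ``no clause of length $\le 2$ is ever derived (under the length-$3$ cap).'' That is precisely the nontrivial claim, and you call it ``the main obstacle'' without establishing it. Second, your ``delicate case'' (resolution on an $X_1$-terminal) similarly gestures at ``tracing derivations backwards'' without doing so; the case analysis there is where the construction's rigidity (each $x_i$ appears in exactly two original clauses that share no $\Sigma_1$-terminal, because the source $4$-clause has no repeated terminal) must actually be deployed.

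The paper discharges both points concretely: it shows that in the first pass only clauses of the exact form $(a \lor x_j \lor x_k)$ with $a$ an original term and $x_j, x_k$ distinct new terminals can be added (call this set $B$); it rules out resolvents of length $\le 2$ by counting distinct terminals in any resolving pair; and, crucially, it shows that in the second pass \emph{no} new clause of length $\le 3$ arises from an $A$-$B$ pair or a $B$-$B$ pair, so the loop exits after two iterations with the instance unchanged. This last point is what your induction-on-iterations framing would need but does not supply: you rely on termination from the runtime bound, but you still must show the invariant survives every iteration that does occur, and the paper collapses this to two. In short, your plan identifies the right obstacles, but the proof of both the short-clause exclusion and the $X_1$-resolution case is left as a sketch; carrying it out would require essentially the same clause-by-clause casework the paper performs, and until that is done the argument has a genuine gap.
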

\begin{proof}
    Recall that, by construction, $\phi'$ is of the form
    \begin{equation*}
        (a_{1,1}\lor a_{1,2}\lor x_1) \land(a_{1,3}\lor a_{1,4}\lor\overline{x_1})\land\dots\land(a_{n,1}\lor a_{n,2}\lor x_n)\land(a_{n,3}\lor a_{n,4}\lor\overline{x_n}).
    \end{equation*}
    Let $A$ denote the set of clauses in $\phi'$. During the first step of Quigley's algorithm, we iterate through all pairs of clauses $C,D\in A$ and check for any new clauses they imply according to Quigley's Lemma $5.9$ \cite{qui:poly-sat}. Recall that Quigley's algorithm ignores any clauses with length greater than 3 \cite{qui:poly-sat}, so we assume that implied clauses of length 4 or greater are not added to the instance. (Note that this counterexample will also hold without this assumption because even if clauses of length 4 or greater are added to the instance, they are never used to imply any further clauses. Thus, the clauses implied during the execution of the algorithm are the same with or without this assumption.)
    
    During this first iteration of the first step of the algorithm, no pair of clauses in $A$ can imply a clause of length either zero or one, since by Quigley's Lemma 5.10 \cite{qui:poly-sat}, the smallest clause that can be implied by two clauses of length 3 is a clause of length 2. Additionally, no pair of clauses can imply a clause of length exactly two in the first iteration of Quigley's algorithm \cite{qui:poly-sat}, as this would require both clauses to share all three of their terminals with each other. If two clauses do not share the same three terminals, they cannot imply a new clause of length 2 since there must be at least 4 unique terminals among them, only one of which will be removed from the new clause during implication, resulting in at least 3 distinct terminals in the final clause and thus a clause length of at least 3. However, for any $i$ with $1 \leq i \leq n$, the only two clauses in $A$ that share their third terminals $x_i\in X_1$ are the two clauses $c_i$ and $c_i'$ containing $x_i$ and $\overline{x_i}$, respectively, that are constructed from some 4-clause in $\phi$; since this 4-clause cannot contain any duplicate terms by assumption, it follows that $c_i$ and $c_i'$ also cannot share any terminals besides $x_i$, meaning they can only imply a clause of length 4, as described earlier. Further, any two clauses in $A$ that share first or second terminals cannot share their third terminals by construction since these clauses would have to have been constructed from different 4-clauses in the original formula $\phi$, and third terminals added during construction are unique to each original 4-clause. Therefore, no clauses of length 2 will be implied during the first iteration of the algorithm.

    The only case in which two clauses $C$ and $D$ imply a clause of length exactly 3 during the first iteration of the algorithm is when $C$ and $D$ share the same first two terminals $a$ and $b$ and have different third terminals. This is because if $C$ and $D$ share their third terminal, then they are either identical (if the third terminal has the same sign in both) and can imply no further clauses, or they can only imply a clause of length 4 (if the third terminal has opposite signs in each), as shown previously. Then, in order for these two clauses to resolve to a new clause, one of $a$ and $b$ must have opposite signs in $C$ and $D$, and in order for the resulting clause to have length 3, the other must have the same sign in both. Without loss of generality, suppose that $a$ has the same sign in both $C$ and $D$ and that $b$ appears positive in $C$ and negated in $D$. Then, $C$ has the form $(a \lor b \lor x_C)$ for some term $x_C \in X_2$, and $D$ has the form $(a \lor \overline{b} \lor x_D)$ for some term $x_D \in X_2$. As explained previously, $x_C$ and $x_D$ must be distinct terminals by construction. Therefore, in this case, $C$ and $D$ imply the clause $(a\lor x_C\lor x_D)$, and this new clause is added to $\phi'$ at the end of step 1.a of Quigley's algorithm \cite{qui:poly-sat}. By our assumption stated in Section \ref{sec-analysis-of-alg}, this new clause is not iterated over until the second iteration of step 1 of Quigley's algorithm.

    Therefore, after the first iteration of the first step of the algorithm, $\phi'$ is the conjunction of all the clauses originally in $\phi'$ before starting the algorithm and all clauses of length 3 implied during the first iteration. Note that since there are no clauses of length less than 3, then no clauses are expanded as per step 1.b of Quigley's algorithm. Let $B$ denote the set of new clauses added to $\phi'$ during the first iteration of step 1 of the algorithm. Then, after the first iteration of the first step of the algorithm, $\phi'$ is of the form
    \begin{gather*}
        (a_{1,1}\lor a_{1,2}\lor x_1) \land(a_{1,3}\lor a_{1,4}\lor\overline{x_1})\land\dots\land(a_{n,1}\lor a_{n,2}\lor x_n)\land(a_{n,3}\lor a_{n,4}\lor\overline{x_n}) \land \bigwedge_{c \in B} c
    \end{gather*}
    Note that no two clauses in $B$ can share the same three terminals. To see this, consider any clause $b_1\in B$, which has the form $(a_i\lor x_j\lor x_k)$ for some $a_i\in \Sigma_2$ and $x_j,x_k\in X_2$. Note that the terms $x_j$ and $x_k$ do not contain the same terminal, as explained earlier. Since $b_1$ is a clause in $B$, there must be two clauses $c_1,c_2\in A$ that imply it.
    More specifically, $c_1$ is of the form $(a_i\lor a_m\lor x_j)$ and $c_2$ is of the form $(a_i \lor\overline{a_m}\lor x_k)$ for some $a_m\in \Sigma_2$. This is because no clause in $A$ contains more than one term in $X_2$, so $x_j$ and $x_k$ must be in separate clauses. Then, in order for $c_1$ and $c_2$ to resolve with one another, both must contain some other terminal $a_m$ that is positive in one of $c_1$ and $c_2$ and negated in the other.
    Further, in order for $a_i$ to be in $b_1$, at least one of $c_1$ and $c_2$ must contain $a_i$. However, without loss of generality, if $c_1$ were to contain $a_i$ but $c_2$ did not, then since $c_2$ must still have length 3, $c_2$ would have to contain some other distinct term $a_j$. In this case, $a_j$ would also be in $b_1$, which would be a contradiction. Thus, $c_1$ and $c_2$ have the forms $(a_i\lor a_m\lor x_j)$ and $(a_i \lor\overline{a_m}\lor x_k)$, respectively.
    Note that there exist two other clauses $c_3,c_4\in A$ that contain the terms $\overline{x_j}$ and $\overline{x_k}$, respectively. However, $a_i$ and $a_m$ cannot appear in either of them since, by definition, no two clauses in $A$ that share some terminal $x\in X_1$ can share any other terminal $a\in\Sigma_1$. Therefore, in order to have length 3, $c_3$ and $c_4$ must each contain at least one other terminal $a_p \in \Sigma_1$ and $a_q \in \Sigma_1$, respectively, which is not found in $c_1$ or $c_2$. Thus any clause implied by $c_3$ or $c_4$ will contain either $a_p$ or $a_q$, respectively, and will not share all three terminals with $b_1$. Additionally, any clause implied by some pair of clauses including at least one clause other than $c_1$, $c_2$, $c_3$, or $c_4$ will not contain both the terminals $x_j$ and $x_k$ and will thus not share all three terminals with $b_1$. This means that each clause in $B$ contains a unique set of 3 terminals, meaning no two clauses in $B$ share all three of their terminals.
    
    After completing its first iteration of the first step, the algorithm then moves to step two. Since no clauses of length 1 have been implied so far, the algorithm moves to step 3. Multiple clauses have been added to the instance $\phi'$ (i.e., all the clauses in $B$), so the algorithm returns to step 1 for its second iteration.

    Now, we will show that no further clauses of length 3 or less are implied during the second iteration of the first step of Quigley's algorithm. Recall that $A$ consists of all clauses originally in $\phi'$ at the start of the algorithm, meaning any clause of length 3 or less implied by a pair of clauses in $A$ has already been added to the instance at the end of the first iteration of the algorithm. In other words, any clause implied by two clauses in $A$ is already in $B$, so a pair of clauses in $A$ cannot imply a new clause during the second iteration of the first step of the algorithm. Therefore, we only need to consider whether a new clause can be implied from either a pair containing a clause in $A$ and a clause in $B$ or from a pair of clauses that are both in $B$.

    First, consider some two clauses $a\in A$ and $b\in B$. By construction, $a$ contains two terminals in $\Sigma_1$ and one terminal in $X_1$, whereas $b$ contains one terminal in $\Sigma_1$ and two terminals in $X_1$. Note that $a$ and $b$ cannot imply a clause of length 0 or 1 since both clauses are of length 3, and they cannot imply a clause of length 2 because they do not share the same 3 terminals (since no terminal is both in $\Sigma_1$ and $X_1$). In order for $a$ and $b$ to imply a clause of length 3, there must be exactly 2 terminals that appear in both $a$ and $b$. This is because as stated earlier, $a$ and $b$ cannot share all three of their terminals, and if there are less than 2 terminals shared between $a$ and $b$, then the clauses contain at least 5 distinct terminals in total (i.e., 2 unique terminals each and at most 1 shared terminal), only one of which is removed through resolution, resulting in an implied clause of length at least 4. Thus, $a$ and $b$ each have exactly one unique terminal, and they have two shared terminals.
    Since $a$ contains two terminals in $\Sigma_1$, one of these shared terminals must be in $\Sigma_1$. Similarly, since $b$ contains two terminals in $X_1$, one of these shared terminals must be in $X_1$. Therefore, the pair of terminals that appear in both $a$ and $b$ consists of one terminal in $\Sigma_1$ and one terminal in $X_1$. In other words, there must exist some terminals $a_b \in \Sigma_1$ and $x_b \in X_1$ that are found in both $a$ and $b$. Then, since the remaining terminal in $a$ is some $a_i \in \Sigma_1$ and the remaining terminal in $b$ is some $x_j \in X_1$, these remaining terminals cannot be shared between $a$ and $b$.
    Now, in order for $a$ and $b$ to resolve with one another, one of their shared terminals must have opposite signs in each of $a$ and $b$. Then, in order for the clause resulting from the resolution to have length 3, the other shared terminal must have the same sign in $a$ and $b$. This is because if this other shared terminal were to also have opposite signs in $a$ and $b$, then there would be 4 distinct terms in the resulting implied clause: the term with $a_i$, the term with $x_j$, the term with the shared terminal in positive form, and the term with the shared terminal in negated form.
    Thus, exactly one of $a_b$ and $x_b$ must appear with opposite signs in $a$ and $b$, and the other must appear with the same sign.

    Now, notice that there are only two clauses that contain the terminal $x_b$ in $A$, and only one of these can contain $a_b$ (since clauses in $A$ that share a terminal in $X_1$ can share no other terminals). Thus, the only clause in $A$ that contains both $a_b$ and $x_b$ is $a$. By the construction of clauses in $B$ described earlier, since $b$ contains the terminals $a_b$, $x_b$, and $x_j$, then $b$ must have been implied during the first iteration of the first step of Quigley's algorithm by some clause containing $a_b$ and $x_b$ and some other clause containing $a_b$ and $x_j$ such that $a_b$ has the same sign in both. Since $a$ is the only clause containing both $a_b$ and $x_b$, it follows that $b$ must have been implied by $a$ and some other clause $d \in A$ that contains $a_b$ and $x_j$. It follows from the rules of resolution that since $a_b$ is found in both $a$ and $b$ and since $b$ is implied by $a$ and some other clause, then $a_b$ must appear with the same sign in both $a$ and $b$. This is because the term containing $a_b$ in $a$ must be the same term containing $a_b$ that is present in $b$. Likewise, since $x_b$ is found in both $a$ and $b$ and since $b$ is implied by $a$ and some other clause, then $x_b$ must appear with the same sign in both $a$ and $b$. However, this is a contradiction because as shown earlier, in order for $a$ and $b$ to resolve to a new clause of length 3, exactly one of $a_b$ and $x_b$ must have opposite signs in $a$ and $b$, and the other must have the same sign. Thus, $a$ and $b$ cannot resolve to a new clause of length 3 or less.

    Next, consider some two distinct clauses $b_1,b_2\in B$. By construction, $b_1$ and $b_2$ must both contain exactly one terminal in $\Sigma_1$ and two terminals in $X_1$. By the same reasoning used for $a$ and $b$ in the previous two paragraphs, $b_1$ and $b_2$ cannot imply a clause of length 0 or 1, and since no two distinct clauses in $B$ share all 3 terminals with each other, $b_1$ and $b_2$ cannot imply a clause of length 2. Then, in order for $b_1$ and $b_2$ to imply another clause of length 3, they must share exactly 2 terminals, and exactly one of these terminals must appear with the same sign in both clauses. Note that since $b_1$ and $b_2$ both contain one terminal in $\Sigma_1$ and two terminals in $X_1$, at least one of the two shared terminals must be a terminal in $X_1$. The other terminal can be in either $\Sigma_1$ or $X_1$, so we must consider both cases.
    
    \begin{description}
        \item[Case 1 (only one shared terminal is in $X_1$):] Suppose $b_1$ and $b_2$ share the terminals $a_b\in\Sigma_1$ and $x_b\in X_1$. Since only one clause $c\in A$ can contain both terminals $a_b$ and $x_b$ by definition, then $b_1$ and $b_2$ must each be implied by $c$. Also, since $c$ contains both $a_b$ and $x_b$, the terminals $a_b$ and $x_b$ must appear in $b_1$ and $b_2$ with the same sign as in $c$. Therefore, $a_b$ and $x_b$ appear with the same sign in both $b_1$ and $b_2$. This is a contradiction because in order for $b_1$ and $b_2$ to imply a new clause of length 3, exactly one of the two shared terminals $a_b$ and $x_b$ must appear with opposite signs in $b_1$ and $b_2$, and the other must appear with the same sign. Thus, $b_1$ and $b_2$ cannot imply any new clauses of length 3.

        \item[Case 2 (both shared terminals are in $X_1$):] Now, suppose instead that $b_1$ and $b_2$ share two terminals $x_{b_1},x_{b_2}\in X_1$. There are two clauses $c_1,c_2\in A$ that contain the terminal $x_{b_1}$ and two other clauses $c_3,c_4\in A$ that contain the terminal $x_{b_2}$. Therefore, since $b_1$ and $b_2$ each contain both $x_{b_1}$ and $x_{b_2}$, then they must each be implied by one of either $c_1$ or $c_2$ and one of either $c_3$ or $c_4$. Without loss of generality, suppose $c_1$ and $c_3$ imply $b_1$. Note that since $b_1$ and $b_2$ are distinct by assumption, then $c_1$ and $c_3$ cannot also imply $b_2$ because there is only one terminal that appears in both $c_1$ and $c_3$ but only appears negated in one of these clauses, meaning there is only one clause that $c_1$ and $c_3$ can imply.
        
        Additionally, consider $c_1$ and $c_4$. By construction, $c_1$ and $c_4$ do not share their third terminal since $c_1$ contains $x_{b_1}$ but not $x_{b_2}$ and $c_4$ contains $x_{b_2}$ but not $x_{b_1}$. Further, since $b_1 \in B$, then by construction, in order for $c_1$ and $c_3$ to imply $b_1$, $c_1$ and $c_3$ must share their first two terminals. Since $c_3$ and $c_4$ share their third terminal $x_{b_2}$, then they cannot share any other terminals, meaning they cannot share their first two terminals. It follows that $c_1$ and $c_4$ cannot share their first two terminals. Therefore, $c_1$ and $c_4$ do not share any terminals, so they cannot imply any clause, meaning they cannot imply $b_2$. For a similar reason, $c_2$ and $c_3$ cannot share any of their terminals, so they also cannot imply $b_2$.
        
        Therefore, $b_2$ must be implied by $c_2$ and $c_4$. However, recall that $x_{b_1}$ appears with opposite signs in $c_1$ and $c_2$, so $x_{b_1}$ must appear with opposite signs in $b_1$ and $b_2$. Similarly, $x_{b_2}$ appears with opposite signs in $c_3$ and $c_4$, so $x_{b_2}$ must appear with opposite signs in $b_1$ and $b_2$. Therefore, $b_1$ and $b_2$ cannot imply any new clauses of length 3, since neither of their two shared terminals appear with the same sign in both clauses.
    \end{description}

    As such, during the second iteration of step $1$ of Quigley's algorithm, no pair of clauses implies a new clause of length $3$ or less.
    
    After this second iteration of step $1$, Quigley's algorithm returns to step $2$. Since no clauses of length $1$ were added in step $1$, there are no contradicting clauses of length 1. Thus, the algorithm moves to step 3. No clauses of length 3 or less were added during the second iteration of step 1, so the algorithm moves to step $4$, which terminates the algorithm and classifies $\phi'$ as satisfiable. However, recall that $\phi'$ is unsatisfiable by Lemma \ref{phi'-unsat}. Therefore, Quigley's algorithm fails on $\phi'$.
\end{proof}

Thus, Quigley's algorithm does not return the correct result on the boolean formula $\phi'$. Since $\phi'$ is constructed from almost any of the infinite number of unsatisfiable boolean formulas in 4CNF form with only a few restrictions, it is clear that there are an infinite number of 3CNF formulas that can be generated in such a way and therefore an infinite number of counterexamples to Quigley's algorithm.

\subsection{Extension of Counterexample}\label{extension-of-counterexample}

In Section~\ref{counterexample-base-case}, we provide an unsatisfiable 3CNF formula that is incorrectly classified as satisfiable by Quigley's algorithm. In this section, we show that we can extend the method of construction of this counterexample to arbitrary lengths to generate counterexamples with different structures.

Let $\{ b_n \}$ be a sequence described by $b_0 = 3$ and $b_n = 2 (b_{n-1} - 1)$. Thus, for $n \in \naturalnumberzero$, the closed form expression is $b_n = 2^n + 2$. We can see this holds since if $n = 0$, then
\[ b_0 = 3 = 1 + 2 = 2^0 + 2 = 2^n + 2, \]
and if $b_n = 2^n + 2$ for some $n \geq 0$, then \[ b_{n+1} = 2(b_n - 1) = 2(2^n + 2 - 1) = 2(2^n + 1) = 2^{n+1} + 2. \]
The first several terms of this sequence are $3, 4, 6, 10, 18, ...$.

Consider some element $b_k$ of this sequence with $k \in \naturalnumberpositive$. Let $\phi_k$ be any unsatisfiable boolean formula in $b_k$CNF form with $n$ clauses such that each clause in $\phi_k$ contains exactly $b_k$ terms and no clause in $\phi_k$ contains the same terminal more than once. We construct a new boolean formula $\phi_{k-1}$ in $b_{k-1}$CNF form (that is, $\left(\frac{b_k}{2}+1\right)$CNF form) as follows.
\begin{enumerate}
    \item For each clause $c$ in $\phi_k$:
    \begin{enumerate}
        \item Let $a_1,...,a_{b_k}$ denote the terms in $c$.
        \item Construct two new clauses $(a_1 \lor \cdots \lor a_{b_k/2} \lor x_i)$ and $(a_{b_k/2 + 1} \lor \cdots \lor a_{b_k} \lor \overline{x_i})$, where $x_i$ is a new terminal that does not appear in $\phi_k$ and $i$ is the index of the clause $c$ in $\phi_k$.
    \end{enumerate}
    \item Let $\phi_{k-1}$ be the formula created by taking the conjunction of the clauses generated in step 1.b.
\end{enumerate}

By construction, this algorithm converts the original formula in $b_k$CNF form to a new formula in $b_{k-1}$CNF form. This is because for $k \in \naturalnumberpositive$, $b_k = 2(b_{k-1} - 1)$. Thus, when each clause is split in half in step 1.b, each of the two new resulting clauses has $\frac{b_k}{2} = \frac{2(b_{k-1} - 1)}{2} = b_{k-1} - 1$ of the original clause's terms, and when a new terminal $x_i$ or $\overline{x_i}$ is added to each of the two new clauses, each of the two resulting clauses has length $(b_{k-1} - 1) + 1 = b_{k-1}$.

By repeatedly applying this algorithm to the original $b_k$CNF formula $\phi_k$, we get a new $b_0$CNF formula $\phi_0$. Since $b_0=3$, $\phi_0$ is a boolean formula in $3$CNF form.
We will show $\phi_0$ is unsatisfiable but is classified as satisfiable by Quigley's algorithm.

First, we will show $\phi_0$ is unsatisfiable.

\begin{lemma}\label{resulting-formulas-unsat}
    Given a boolean formula $\phi_k$ in $b_k$CNF form for some $k\in\naturalnumberpositive$ meeting the constraints described earlier, all boolean formulas in $b_j$CNF form for $0\leq j < k$ created by repeatedly applying the procedure described are unsatisfiable.
\end{lemma}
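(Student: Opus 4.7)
The plan is to reduce the lemma to a single-step preservation claim and then induct on the number of applications of the procedure. Specifically, I would first prove the following intermediate claim: for any $k \in \naturalnumberpositive$, if $\phi_k$ is an unsatisfiable $b_k$CNF formula satisfying the stated constraints, then the $b_{k-1}$CNF formula $\phi_{k-1}$ obtained by one application of the procedure is also unsatisfiable. This is a direct generalization of Lemma \ref{phi'-unsat}, and I would carry the argument out by contrapositive: assume $\phi_{k-1}$ has some satisfying assignment $A'$, restrict it to the variables originally in $\phi_k$ to obtain an assignment $A$, and show $A$ satisfies $\phi_k$.

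To argue that $A$ satisfies $\phi_k$, I would fix an arbitrary clause $c = (a_1 \lor \cdots \lor a_{b_k})$ of $\phi_k$ and examine the two clauses $c_1 = (a_1 \lor \cdots \lor a_{b_k/2} \lor x_i)$ and $c_2 = (a_{b_k/2 + 1} \lor \cdots \lor a_{b_k} \lor \overline{x_i})$ that $c$ produced in $\phi_{k-1}$. Both must evaluate to true under $A'$. Since $x_i$ is assigned either true or false, exactly one of the terms $x_i$ and $\overline{x_i}$ is false under $A'$; whichever of $c_1,c_2$ contains that false term must still be satisfied by one of the original terms $a_1,\ldots,a_{b_k}$. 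In either case some $a_j$ is true under $A'$, and since the $a_j$'s mention only variables of $\phi_k$, the same $a_j$ is true under $A$, so $c$ is satisfied by $A$. Because $c$ was arbitrary, $A$ satisfies $\phi_k$, contradicting its unsatisfiability.

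With the single-step claim in hand, I would prove the lemma by (downward) induction on $j$, starting from $j = k-1$ and ending at $j = 0$. The base case $j = k-1$ is precisely the single-step claim applied to $\phi_k$. For the inductive step, assuming $\phi_{j+1}$ is an unsatisfiable $b_{j+1}$CNF formula produced by the iterated procedure, I would apply the single-step claim again (with $b_{j+1}$ playing the role of $b_k$) to conclude that the resulting $\phi_j$ is unsatisfiable in $b_j$CNF form. Iterating this all the way down gives unsatisfiability of $\phi_j$ for every $0 \leq j < k$.

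I do not expect any serious obstacle here: the construction is symmetric in the two halves, and the new terminal $x_i$ introduced for each clause $c$ appears in exactly the two clauses derived from $c$, so restricting a satisfying assignment works cleanly. The only care needed is to verify that the sizes match up ($b_k/2 = b_{k-1}-1$ follows from the recurrence $b_k = 2(b_{k-1}-1)$) and to note that the restriction step is well-defined precisely because the freshly introduced terminals $x_i$ do not appear in $\phi_k$ and each $x_i$ is unique to one clause of $\phi_k$, which is guaranteed by the construction.
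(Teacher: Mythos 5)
Your proposal is correct and is essentially the paper's own argument, differing only in presentation: the paper proves the single-step preservation directly (take any assignment, find a false clause $c$ of $\phi_j$, and show any extension to the new $x_i$ variables leaves one of $c_1, c_2$ false), while you phrase it as a contrapositive (restrict a hypothetical satisfying assignment of $\phi_{j-1}$ and show it satisfies $\phi_j$). The structural move --- isolate a one-step claim exploiting that the fresh terminal $x_i$ forces one of the two derived clauses to rely on the original terms, then chain it by downward induction on $j$ --- matches the paper exactly.
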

\begin{proof}
    We proceed by induction over $k$.

    First, consider the boolean formula $\phi_k$ in $b_k$CNF form. By construction, this formula is unsatisfiable.

    Next, suppose the boolean formula $\phi_j$ in $b_j$CNF form is unsatisfiable for some $j\in\naturalnumberpositive$ such that $1\leq j \leq k$. We will show the boolean formula $\phi_{j-1}$ in $b_{j-1}$CNF form constructed as described previously is also unsatisfiable. Since $\phi_j$ is unsatisfiable, then for every complete assignment $A$ to the terminals in $\phi_j$, there must be at least one clause $c$ in $\phi_j$ that evaluates to false. Since each clause in $\phi_j$ is a disjunction of $b_j$ terms, each term in $c$ must evaluate to false under $A$. Thus, $c$ evaluates to the clause $(F\lor \ldots \lor F)$ under the complete assignment $A$.
    Now, let $c_1$ and $c_2$ be the two clauses in $\phi_{j-1}$ that are constructed from $c$ during step 1.b of the procedure described previously. Then, by construction, $c_1$ and $c_2$ each contains half the terms of $c$, all of which evaluate to false under the partial assignment $A$, along with a new terminal $x_i$, which is positive in one of the clauses $c_1$ and $c_2$ and negated in the other. Without loss of generality, suppose $x_i$ is positive in $c_1$ and negated in $c_2$. Then, under the partial assignment $A$, $c_1$ evaluates to $(F\lor\ldots\lor F\lor x_i)$, and $c_2$ evaluates to $(F\lor\ldots\lor F \lor\overline{x_i})$. Now, under any complete assignment to the variables in $\phi_{j-1}$, $x_i$ must evaluate to either true or false. If $x_i$ is true, then $\overline{x_i}$ is false, so $c_2$ evaluates to $(F\lor\ldots\lor F)$, which is simply false; and if $x_i$ is false, then $c_1$ evaluates to $(F\lor\ldots\lor F)$, which is false. As such, any assignment to the variable $x_i$, and thus any complete assignment to the variables of $\phi_{j-1}$, results in at least one clause in $\phi_{j-1}$ evaluating to false, so $\phi_{j-1}$ is unsatisfiable.
\end{proof}

Since $\phi_0$ is constructed by repeatedly applying the algorithm described previously to a boolean formula in $b_k$CNF form for some $k\in\naturalnumberpositive$, then by Lemma \ref{resulting-formulas-unsat}, $\phi_0$ must be unsatisfiable. However, we will now show that Quigley's algorithm classifies $\phi_0$ as satisfiable.

\begin{theorem}\label{inductive-case-theorem}
    For any $k \in \naturalnumberpositive$, a 3CNF formula $\phi_0$ constructed from an unsatisfiable $b_k$CNF formula $\phi_k$ as described previously is classified as satisfiable by Quigley's algorithm, as described in Section 6 of his paper \cite{qui:poly-sat}. Thus, Quigley's algorithm fails on $\phi_0$.
\end{theorem}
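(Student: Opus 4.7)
The plan is to proceed by induction on $k$, reducing to Theorem~\ref{base-case-theorem} as the base case. For $k = 1$, we have $b_1 = 4$, so $\phi_1$ is an unsatisfiable 4CNF formula satisfying the stated constraints, and applying one halving step to $\phi_1$ produces a 3CNF formula $\phi_0$ of exactly the same form as the formula $\phi'$ constructed in Section~\ref{counterexample-base-case}. Theorem~\ref{base-case-theorem} then directly yields that Quigley's algorithm classifies $\phi_0$ as satisfiable, establishing the base case.

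For the inductive step, I would assume that the theorem holds for some $k - 1 \geq 1$ and let $\phi_k$ be any unsatisfiable $b_k$CNF formula meeting the stated constraints. Let $\phi_{k-1}$ denote the formula obtained by applying one halving step to $\phi_k$. The key observation is that the 3CNF formula $\phi_0$ obtained from $\phi_k$ by $k$ halving steps is identical to the 3CNF formula obtained from $\phi_{k-1}$ by only $k - 1$ halving steps, since each subsequent halving step operates only on the current intermediate formula. Thus, to apply the inductive hypothesis to $\phi_{k-1}$, one only needs to verify that $\phi_{k-1}$ itself meets the hypotheses of the theorem with $k$ replaced by $k - 1$.

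The verification reduces to a structural check on the halving operation: $\phi_{k-1}$ is in $b_{k-1}$CNF form since each newly-created clause has $b_k / 2 + 1 = b_{k-1}$ terms; $\phi_{k-1}$ consists of $2n$ clauses each of length exactly $b_{k-1}$ by construction; $\phi_{k-1}$ is unsatisfiable by Lemma~\ref{resulting-formulas-unsat}; and no clause of $\phi_{k-1}$ contains a duplicate terminal, since the terms inherited from each $\phi_k$ clause are distinct by assumption on $\phi_k$ and the newly-introduced bridge terminal $x_i$ is fresh. Once these are confirmed, applying the inductive hypothesis to $\phi_{k-1}$ immediately yields that Quigley's algorithm classifies $\phi_0$ as satisfiable.

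I expect the main obstacle here to be minimal, since the induction peels off one halving step at a time and reduces the problem to the $k = 1$ case, whose substantive combinatorial analysis has already been carried out in the proof of Theorem~\ref{base-case-theorem}. The one place to be careful is confirming that the $k = 1$ instance of the construction in Section~\ref{extension-of-counterexample} produces a formula truly identical in form to the $\phi'$ of Section~\ref{counterexample-base-case}, which it does since both halve clauses of length four in the same way and introduce a single fresh bridge terminal with opposite sign in the two halves.
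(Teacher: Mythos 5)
Your proof is correct and takes essentially the same approach as the paper: induction on $k$ with base case $k=1$ handled by Theorem~\ref{base-case-theorem}, and the inductive step peeling off one halving step, invoking Lemma~\ref{resulting-formulas-unsat} for unsatisfiability, checking that the intermediate formula has no repeated terminals in any clause (so it meets the theorem's hypotheses), and noting that the resulting $\phi_0$ is the same whichever intermediate formula one starts from. The only cosmetic difference is that you index the inductive step as $k-1 \to k$ whereas the paper writes it as $k \to k+1$.
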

\begin{proof}
    We proceed by induction over $k$.

    Consider the base case when $k = 1$. Then, $b_k = b_1 = 2^1 + 2 = 4$. By Theorem~\ref{base-case-theorem}, the formula is unsatisfiable but classified as satisfiable by Quigley's algorithm.

    Now, consider the inductive case. In particular, assume that for some $k \in \naturalnumberpositive$, a 3CNF formula constructed from any unsatisfiable $b_k$CNF formula by the procedure given earlier is classified as satisfiable by Quigley's algorithm. We will show the same holds for a 3CNF formula constructed from an unsatisfiable $b_{k+1}$CNF formula. Consider some unsatisfiable $b_{k+1}$CNF formula $\phi_{k+1}$ with $n$ clauses such that each clause in $\phi_{k+1}$ contains exactly $b_{k+1}$ terms and no clause in $\phi_{k+1}$ contains the same terminal more than once. In the first iteration of the procedure for constructing $\phi_0$, we construct a $b_k$CNF formula $\phi_k$ from $\phi_{k+1}$. By Lemma~\ref{resulting-formulas-unsat}, since $\phi_{k+1}$ is unsatisfiable, then $\phi_k$ is unsatisfiable. Additionally, by construction, $\phi_k$ is a $b_k$CNF formula in which no clause contains the same terminal more than once. This is because in each clause of $\phi_k$, none of the first $b_k-1$ terminals in the clause can be repeated since they come directly from a $b_{k+1}$-clause in which no terminals are repeated, and the new last terminal $x_i$ is created such that it is a terminal not previously found in the formula, so it cannot be any of the first $b_k-1$ terminals. Then, since constructing $\phi_k$ is part of the procedure for constructing the 3CNF formula $\phi_0$ from $\phi_{k+1}$, then if we were to start the procedure with $\phi_k$, the 3CNF formula constructed from $\phi_k$ by the procedure will also be precisely $\phi_0$. Thus, by the inductive hypothesis, Quigley's algorithm classifies $\phi_0$ as satisfiable.

    Therefore, although $\phi_0$ is unsatisfiable, Quigley's algorithm classifies it as satisfiable, so Quigley's algorithm fails on $\phi_0$.
    
    The result follows by induction.
\end{proof}

\subsection{Analysis of Quigley's Algorithm With No Bounds}

Notice that a key reason why Quigley's algorithm fails on the given example instances is that in each iteration of the algorithm, it only attempts to resolve or expand clauses of length at most 3. This gives the algorithm a polynomial runtime (given the assumptions stated in Section \ref{section:runtime}) but results in the algorithm producing incorrect results on some inputs. Thus, a natural question to consider is whether one can increase or remove the bound on the length of clauses considered by step $1$ of Quigley's algorithm. If one were to increase the bound, then Quigley's justification of the correctness of his algorithm still fails since a counterexample to his Lemma 5.11 exists for clauses of any length.

However, if one were to remove the bound altogether, then the algorithm would no longer be guaranteed to run in polynomial time. In fact, this change would cause Quigley's algorithm to require exponential time and space in some cases. To see why, consider the boolean formula $x_1\land x_2\land \ldots\land x_n$ with $n$ clauses for any $n \in \naturalnumberpositive$ such that $x_1, ..., x_n$ are distinct variables. Clearly, this formula is in 3CNF form as each clause has length 1, and it is satisfiable by assigning each terminal the value true. During the first iteration of step $1$ of Quigley's algorithm, the algorithm will check whether the length-1 clause $x_1$ alongside any other clause can imply any new clauses~\cite{qui:poly-sat}. By construction, no new clauses will be implied since none of the clauses can resolve with one another. In step 1.b of the algorithm, $x_1$ will be expanded to create all clauses it implies according to Quigley's Lemma $5.8$~\cite{qui:poly-sat}. Since there is no bound on clause length, the maximum length of a clause implied this way is $n$, in which case the implied clause would contain every terminal that appears in the formula exactly once. Since each terminal other than $x_1$ can appear in either positive or negated form in these new clauses, the clause $x_1$ implies $2^{n-1}$ new clauses according to Quigley's Lemma $5.8$~\cite{qui:poly-sat}. Thus, $O(2^n)$ new clauses are added to the formula during the first iteration of step $1$ of the algorithm, meaning that it fails to run with polynomial space and therefore fails to run in polynomial time.

\section{Conclusion}

In this paper, we have shown that Quigley's algorithm is flawed by providing counterexamples to several lemmas with which Quigley claims to prove the correctness of his algorithm. We have also provided an infinite set of unsatisfiable boolean formulas that are incorrectly classified as satisfiable by Quigley's algorithm and have shown that removing the bound on clause length causes the algorithm to require exponential space and time in the worst case. As a result, Quigley fails to provide an algorithm that decides 3SAT in deterministic polynomial time, and he thus fails to prove that $\p=\np$.

\paragraph{Acknowledgments}
We would like to thank Lane~A.~Hemaspaandra and Michael~Reidy for their helpful comments on prior drafts. The authors are responsible for any remaining errors.

\bibliographystyle{alpha}
\bibliography{gry-reu,local_refs}

\begin{thebibliography}{CLRS09}

\bibitem[AB09]{aro-bar:computational-complexity}
S.~Arora and B.~Barak.
\newblock {\em Computational Complexity: A Modern Approach}.
\newblock Cambridge University Press, 2009.

\bibitem[CLRS09]{cor-lei-riv-ste:b:algorithms-third-edition}
T.~Cormen, C.~Leiserson, R.~Rivest, and C.~Stein.
\newblock {\em Introduction to Algorithms}.
\newblock MIT Press/McGraw Hill, 3rd edition, 2009.

\bibitem[HU79]{hop-ull:b:automata}
J.~Hopcroft and J.~Ullman.
\newblock {\em Introduction to Automata Theory, Languages, and Computation}.
\newblock Addison-Wesley, 1979.

\bibitem[Lad75]{lad:j:np-incomplete}
R.~Ladner.
\newblock On the structure of polynomial time reducibility.
\newblock {\em Journal of the ACM}, 22(1):155--171, 1975.

\bibitem[NR21]{nor-rus:aima}
P.~Norvig and S.~Russell.
\newblock {\em Artificial Intelligence: A Modern Approach}.
\newblock Pearson, 4th edition, 2021.

\bibitem[Qui24]{qui:poly-sat}
R.~Quigley.
\newblock A polynomial time algorithm for {3SAT}.
\newblock Technical Report arXiv:2406.08489v1~[cs.CC], February 2024.

\bibitem[Sip13]{sip:b:introduction-third-edition}
M.~Sipser.
\newblock {\em Introduction to the Theory of Computation}.
\newblock Cengage Learning, 3rd edition, 2013.

\end{thebibliography}

\end{document}